\definecolor{shadecolor}{rgb}{0.9,0.9,0.9}
\newtheorem{definition}{Definition}
\newtheorem{proposition}[definition]{Proposition}
\newtheorem{corollary}[definition]{Corollary}
\definecolor{Gray}{gray}{0.92}
\definecolor{Gray2}{gray}{0.75}
\definecolor{maroon}{cmyk}{0,0.87,0.68,0.32}
\def\squareforqed{\hbox{\rlap{$\sqcap$}$\sqcup$}}
\def\qed{\ifmmode\squareforqed\else{\unskip\nobreak\hfil
\penalty50\hskip1em\null\nobreak\hfil\squareforqed
\parfillskip=0pt\finalhyphendemerits=0\endgraf}\fi}
\def\endenv{\ifmmode\;\else{\unskip\nobreak\hfil
\penalty50\hskip1em\null\nobreak\hfil\;
\parfillskip=0pt\finalhyphendemerits=0\endgraf}\fi}
\newenvironment{proof}{\noindent \textbf{{Proof~} }}{\hfill $\blacksquare$}
\newcounter{remark}
\newenvironment{remark}[1][]{\refstepcounter{remark}\par\medskip\noindent%
\textbf{Remark~\theremark #1} }{\medskip}
\newcounter{example}
\mathchardef\ordinarycolon\mathcode`\:
\def\vcentcolon{\mathrel{\mathop\ordinarycolon}}
\newmdenv[skipabove=7pt,
skipbelow=7pt,
backgroundcolor=darkblue!15,
innerleftmargin=5pt,
innerrightmargin=5pt,
innertopmargin=5pt,
leftmargin=0cm,
rightmargin=0cm,
innerbottommargin=5pt,
linewidth=1pt]{tBox}
\newmdenv[skipabove=7pt,
skipbelow=7pt,
backgroundcolor=blue2!25,
innerleftmargin=5pt,
innerrightmargin=5pt,
innertopmargin=5pt,
leftmargin=0cm,
rightmargin=0cm,
innerbottommargin=5pt,
linewidth=1pt]{dBox}
\newmdenv[skipabove=7pt,
skipbelow=7pt,
backgroundcolor=darkkblue!15,
innerleftmargin=5pt,
innerrightmargin=5pt,
innertopmargin=5pt,
leftmargin=0cm,
rightmargin=0cm,
innerbottommargin=5pt,
linewidth=1pt]{sBox}
\definecolor{darkblue}{RGB}{0,76,156}
\definecolor{darkkblue}{RGB}{0,0,153}
\definecolor{blue2}{RGB}{102,178,255}
\definecolor{darkred}{RGB}{195,0,0}
\newcommand{\nc}{\newcommand}
\nc{\rnc}{\renewcommand}
\nc{\beg}{\begin{equation}}
\nc{\eeq}{{\end{equation}}}
\nc{\beqa}{\begin{eqnarray}}
\nc{\eeqa}{\end{eqnarray}}
\nc{\lbar}[1]{\overline{#1}}
\nc{\bra}[1]{\langle#1|}
\nc{\ket}[1]{|#1\rangle}
\nc{\ketbra}[2]{|#1\rangle\!\langle#2|}
\nc{\braket}[2]{\langle#1|#2\rangle}
\nc{\proj}[1]{| #1\rangle\!\langle #1 |}
\nc{\avg}[1]{\langle#1\rangle}
\nc{\Rank}{\operatorname{Rank}}
\nc{\smfrac}[2]{\mbox{$\frac{#1}{#2}$}}
\nc{\tr}{\operatorname{Tr}}
\nc{\ox}{\otimes}
\nc{\dg}{\dagger}
\nc{\dn}{\downarrow}
\nc{\cA}{{\cal A}}
\nc{\cB}{{\cal B}}
\nc{\cC}{{\cal C}}
\nc{\cD}{{\cal D}}
\nc{\cE}{{\cal E}}
\nc{\cF}{{\cal F}}
\nc{\cG}{{\cal G}}
\nc{\cH}{{\cal H}}
\nc{\cI}{{\cal I}}
\nc{\cJ}{{\cal J}}
\nc{\cK}{{\cal K}}
\nc{\cL}{{\cal L}}
\nc{\cM}{{\cal M}}
\nc{\cN}{{\cal N}}
\nc{\cO}{{\cal O}}
\nc{\cP}{{\cal P}}
\nc{\cQ}{{\cal Q}}
\nc{\cR}{{\cal R}}
\nc{\cS}{{\cal S}}
\nc{\cT}{{\cal T}}
\nc{\cV}{{\cal V}}
\nc{\cX}{{\cal X}}
\nc{\cY}{{\cal Y}}
\nc{\cZ}{{\cal Z}}
\nc{\cW}{{\cal W}}
\nc{\csupp}{{\operatorname{csupp}}}
\nc{\qsupp}{{\operatorname{qsupp}}}
\nc{\var}{{\operatorname{var}}}
\nc{\rar}{\rightarrow}
\nc{\lrar}{\longrightarrow}
\nc{\polylog}{{\operatorname{polylog}}}
\nc{\wt}{{\operatorname{wt}}}
\nc{\av}[1]{{\left\langle {#1} \right\rangle}}
\nc{\supp}{{\operatorname{supp}}}
\def\a{\alpha}
\def\di{\diamondsuit}
\def\ve{\varepsilon}
\def\x{\xi}
\nc{\RR}{{{\mathbb R}}}
\nc{\CC}{{{\mathbb C}}}
\nc{\FF}{{{\mathbb F}}}
\nc{\NN}{{{\mathbb N}}}
\nc{\ZZ}{{{\mathbb Z}}}
\nc{\PP}{{{\mathbb P}}}
\nc{\QQ}{{{\mathbb Q}}}
\nc{\UU}{{{\mathbb U}}}
\nc{\EE}{{{\mathbb E}}}
\nc{\id}{{\operatorname{id}}}
\nc{\CHSH}{{\operatorname{CHSH}}}
\nc{\be}{\begin{equation}}
\nc{\ee}{{\end{equation}}}
\nc{\bea}{\begin{eqnarray}}
\nc{\eea}{\end{eqnarray}}
\nc{\rU}{\mbox{U}}
\nc{\ob}[1]{#1}
\nc{\SEP}{{\text{\rm SEP}}}
\nc{\NS}{{\text{\rm NS}}}
\nc{\LOCC}{{\text{\rm LOCC}}}
\nc{\PPT}{{\text{\rm PPT}}}
\nc{\EXT}{{\text{\rm EXT}}}
\nc{\Sym}{{\operatorname{Sym}}}
\nc{\ERLO}{{E_{\text{r,LO}}}}
\nc{\ERLOCC}{{E_{\text{r,LOCC}}}}
\nc{\ERPPT}{{E_{\text{r,PPT}}}}
\nc{\ERLOCCinfty}{{E^{\infty}_{\text{r,LOCC}}}}
\nc{\Aram}{{\operatorname{\sf A}}}
\def\grd@save@target#1{%
  \def\grd@target{#1}}
\def\grd@save@start#1{%
  \def\grd@start{#1}}
\tikzset{
  grid with coordinates/.style={
    to path={%
      \pgfextra{%
        \edef\grd@@target{(\tikztotarget)}%
        \tikz@scan@one@point\grd@save@target\grd@@target\relax
        \edef\grd@@start{(\tikztostart)}%
        \tikz@scan@one@point\grd@save@start\grd@@start\relax
        \draw[minor help lines,magenta] (\tikztostart) grid (\tikztotarget);
        \draw[major help lines] (\tikztostart) grid (\tikztotarget);
        \grd@start
        \pgfmathsetmacro{\grd@xa}{\the\pgf@x/1cm}
        \pgfmathsetmacro{\grd@ya}{\the\pgf@y/1cm}
        \grd@target
        \pgfmathsetmacro{\grd@xb}{\the\pgf@x/1cm}
        \pgfmathsetmacro{\grd@yb}{\the\pgf@y/1cm}
        \pgfmathsetmacro{\grd@xc}{\grd@xa + \pgfkeysvalueof{/tikz/grid with coordinates/major step}}
        \pgfmathsetmacro{\grd@yc}{\grd@ya + \pgfkeysvalueof{/tikz/grid with coordinates/major step}}
        \foreach \x in {\grd@xa,\grd@xc,...,\grd@xb}
        \node[anchor=north] at (\x,\grd@ya) {\pgfmathprintnumber{\x}};
        \foreach \y in {\grd@ya,\grd@yc,...,\grd@yb}
        \node[anchor=east] at (\grd@xa,\y) {\pgfmathprintnumber{\y}};
      }
    }
  },
  minor help lines/.style={
    help lines,
    step=\pgfkeysvalueof{/tikz/grid with coordinates/minor step}
  },
  major help lines/.style={
    help lines,
    line width=\pgfkeysvalueof{/tikz/grid with coordinates/major line width},
    step=\pgfkeysvalueof{/tikz/grid with coordinates/major step}
  },
  grid with coordinates/.cd,
  minor step/.initial=.2,
  major step/.initial=1,
  major line width/.initial=2pt,
}
\def\problem@s{}
\newcounter{problems@cnt}
\newcommand{\allproblems}{\problem@s}
\definecolor{colorone}{rgb}{1,0.36,0.03}
\definecolor{colortwo}{rgb}{0.54,0.71,0.03}
\definecolor{colorthree}{rgb}{0.01,0.51,0.93}
\definecolor{colorfour}{rgb}{0.47,0.26,0.58}
\nc{\st}{\text{subject to} \ }
\nc{\supre}{\text{supremum} \ }
\nc{\sdp}{\text{sdp}}
\nc{\cU}{\mathcal U}
\begin{document}

\title{\textbf{Pursuing the fundamental limits for quantum  communication}}

\author{Xin Wang}
\affiliation{Institute for Quantum Computing, Baidu Research, Beijing 100193, China}
\email{wangxin73@baidu.com}
   
\date{\today}

\begin{abstract}
The quantum capacity of a noisy quantum channel determines the maximal rate at which we can code reliably over asymptotically many uses of the channel, and it characterizes the channel's ultimate ability to transmit quantum information coherently. In this paper, we derive single-letter upper bounds on the quantum and private capacities of quantum channels. The quantum capacity of a quantum channel is always no larger than the quantum capacity of its extended channels, since the extensions of the channel can be considered as assistance from the environment. By optimizing the  parametrized extended channels with specific structures such as the flag structure, we obtain new upper bounds on the quantum capacity of the original quantum channel. Furthermore, we extend our approach to estimating the fundamental limits of private communication and one-way entanglement distillation. As notable applications, we establish improved upper bounds to the quantum and private capacities for fundamental quantum channels of interest in quantum information, some of which are also the sources of noise in superconducting quantum computing. In particular, our upper bounds on the quantum capacities of the depolarizing channel and the generalized amplitude damping channel are strictly better than previously best-known bounds for certain regimes. 
\end{abstract}

\maketitle

\section{Introduction}

Quantum communication is an integral part of quantum information theory. The quantum capacity of a noisy quantum channel is the maximum rate at which it can convey quantum information reliably over asymptotically many uses of the channel. The theorem by Lloyd, Shor, and Devetak~\cite{Lloyd1997,Shor2002a,Devetak2005a} and the work in Refs.~\cite{Schumacher1996a,Barnum2000,Barnum1998} show
that the quantum capacity is equal to the regularized coherent information, i.e., the quantum capacity of a noisy quantum channel $\cN$
is given by 
\begin{equation}
  Q(\cN) = \lim_{n\to\infty} 
  \frac{Q^{(1)}(\cN^{\otimes n})}{n},
\label{eq:qcap}
\end{equation}
where $Q^{(1)}(\cN):= \max_{\rho_{A}} {H}(\cN(\rho_{A})) - H(\cN^c(\rho_{A}))$ denotes the coherent information of the channel,
 $\cN^c$ is the complementary channel of $\cN$, and $H$ is
the von~Neumann entropy.

Quantum capacity is notoriously complicated and hard to evaluate since it is characterized by the above multi-letter, regularized expression. Note that such  regularization is necessary in general due to the super-additivity of channel's coherent information \cite{DiVincenzo1998a,Leditzky2018a} and it is worse that an unbounded number of channel uses may be required to detect a channel’s capacity \cite{Cubitt2015}. 

To evaluate the quantum capacity, substantial efforts have been made in the past two decades. During the development of estimating the quantum capacity, there are two threads of studying the upper bounds on quantum capacity. One is to develop single-letter upper bounds that are efficiently computable for general quantum channels~(see,~e.g.,~\cite{Holevo2001,Muller-Hermes2015,Wang2016a,Wang2017d,Berta2017a,XWthesis,Pisarczyk2018,Fang2019a}), which aims to provide efficiently computable benchmarks for arbitrary quantum noise. Another is to develop single-letter upper bounds that are relatively tight or computable for specific classes of quantum channels (see,~e.g.,~\cite{Wolf2007,Smith2008a,Sutter2014,Cerf2000, Smith2008b,Leditzky2017,Fanizza2019,Gao2015a,Tomamichel2015a}), which help us better understand quantum communication via these quantum channels. 

The depolarizing channel $\cD_p^d(\rho):=(1-p)\rho + p \tr(\rho)I_d/d$ is one of the most important and fundamental quantum channels \cite{Nielsen2010,Wilde2017book}, which is useful in modelling noise for quantum hardware such as the superconducting quantum processor~\cite{Arute2019}. However, even for the qubit depolarizing channel, the quantum capacity remains unsolved despite substantial efforts \cite{Smith2008a,Cerf2000,Sutter2014,Smith2008b,Leditzky2017,Leditzky2018a, Fanizza2019}. Recently, Fanizza et al.~\cite{Fanizza2019} considered the degradable extensions with non-orthogonal quantum flags and obtained an improved upper bound on the quantum capacity of qubit depolarizing channel in an intermediate regime of noise. 

In this paper, we establish single-letter converse bounds on the quantum and private capacities of a noisy quantum channel. As notable applications, we establish improved upper bounds on the quantum capacity for fundamental channels of great interest in quantum communication, including the depolarizing channel and the generalized amplitude damping channel. In particular, we also obtain efficiently computable upper bounds on the one-way distillable entanglement of general bipartite quantum states.

\section{Results}
\subsection{Upper bounds on one-way distillable entanglement}
We begin with the preliminaries on quantum information. We will frequently use symbols such as $A$ (or $A'$) and $B$ (or $B'$) to denote Hilbert spaces associated with Alice and Bob, respectively. We use $d_A$ to denote the dimension of system $A$. The set of linear operators acting on $A$ is denoted by $\cL(A)$. We usually write an operator with a subscript indicating the system that the operator acts on, such as $M_{AB}$, and write $M_A:=\tr_B M_{AB}$. Note that for a linear operator $X\in\cL(A)$, we define $|X|=\sqrt{X^\dagger X}$, where $X^\dagger$ is the adjoint operator of $X$, and the trace norm of $X$ is given by $\|X\|_1=\tr |X|$.

Quantum information processing tasks necessarily rely on entanglement resources, especially the maximally entangled states. It is thus important to develop entanglement distillation protocols, which aims to obtain maximally entangled states from less-entangled bipartite states shared between two parties using one-way or two-way local operations and classical communication (LOCC). Let Alice and Bob initially share $n$ copies of a mixed bipartite state $\rho_{AB}$. Their goal is to obtain a state that is close to $\Phi^{\ox {m_n}}$ within some error tolerance $\ve$ via one-way or two-way LOCC, where $\ket{\Phi} = \frac{1}{\sqrt 2}(\ket{00} + \ket{11})$. When $\ve$ vanishes asymptotically, the asymptotic rate at which ebits can be generated is defined by $\lim_{n\to\infty}m_n/n$, and it is called an achievable rate for  entanglement distillation.  The one-way distillable entanglement $E_{D,\to}$ is defined as the supremum over all achievable rates under one-way LOCC.  It is known that the one-way distillable entanglement is given by a regularized
formula 
$D_{\to}(\rho)=\lim_{n \to \infty} \frac{1}{n} D^{(1)}\left(\rho^{\otimes n}\right)$  with $
D^{(1)}(\rho_{AB}):=\sup_{\Pi:{AB\to A'B'}} I\left(A^{\prime}\right\rangle B^{\prime})_{\Pi(\rho)}$, where $I(A\rangle B)_{\rho}=H(\rho_B)-H(\rho_{AB})$ and the optimization is over one-way LOCC operations \cite{Devetak2003a}.
Thus it is extremely difficult to evaluate.
Even for the qubit isotropic states $\rho_{f} = (1-f)\Phi + fI/4,$, the one-way distillable entanglement is unsolved.

Our first main result is an efficiently computable upper bound on the one-way distillable entanglement.  Given a bipartite state $\rho_{AB}$ shared between Alice and Bob,
we focus on its extended bipartite quantum state $\rho_{ABF}$, where Alice holds system $A$ and Bob holds system $BF$. A useful fact is that the one-way distillable entanglement of $\rho_{ABF}$ is always no smaller than the original bipartite $\rho_{AB}$ since the extended system $F$ on Bob's side can be discarded. Therefore, any upper bound on $E_{D,\to}(\rho_{ABF})$ yields an upper bound on $E_{D,\to}(\rho_{AB})$. Our method further estimates $E_{D,\to}(\rho_{ABF})$ via the converse bound via the approximate degradability bound on one-way distillable entanglement~\cite{Leditzky2017} and in particular  optimizes the extended states.

Let $\rho_{AB}$ be a bipartite state with purification $\ket\phi_{ABE}$. The state $\rho_{AB}$ is called degradable if there exists an completely positive and trace-preserving map (CPTP) $\cM_{B\to E}$ such that $\cM_{B\to E}(\rho_{AB}) = \tr_B \phi_{ABE}$. For any degradable state $\rho_{AB}$, its one-way distillable entanglement is given by its coherent information \cite{Leditzky2017}, i.e., $E_{D,\to}(\rho_{AB}) = I(A\rangle B)_{\rho}$.
This nice result and its channel version have been extended to the approximate case~\cite{Leditzky2017,Sutter2014}. For any bipartite state $\rho_{AB}$ with purification $\ket\phi_{ABE}$, its degradability parameter \cite{Leditzky2017} is defined as $\eta\left(\rho\right)_{A|B}:=\min _{\mathcal{M}_{B \rightarrow E} \text{ is CPTP}} \frac{1}{2}\left\|\rho_{A E}-\mathcal{M}_{B\to E}\left(\rho_{A B}\right)\right\|_{1}$, which can be computed via semi-definite programming (SDP).
We then have the following chain of inequalities:
\begin{align}
E_{D,\to}(\rho_{AB}) & \le E_{D,\to}(\rho_{ABF}) \\
& \le I(A\rangle BF)_{\rho}+4 \eta\left(\rho\right)_{A|BF} \log |\widehat E|+ 2g(\eta\left(\rho\right)_{A|BF}),
\end{align}
where $g\left(p\right):= (1+p)h({p}/{(1+p)})$ is the bosonic entropy, $h\left(p\right):=-p\log p - (1-p)\log(1-p)$
 is the binary entropy, and $\widehat E$ is the system required to purify $\rho_{ABF}$.
The first line follows since Bob can always discard the system $F$.  The second line
follows due to the bound via approximate degradablity (Theorem 2.12 in \cite{Leditzky2017}).   

The next step is to optimize over the extended states, which is not easy since the optimization is not convex. However, we could optimize over the extended states with certain certain structures, such as the flag structure. To be specific,
for the state with substate decomposition $\rho_{AB}=\sum_{j=1}^k \tau_{AB}^j$ with $\tau_{AB}^j\ge 0$ and $\tr \tau_{AB}^j \le 1$ for each $j$, 
its one-way distillable entanglement is upper bounded by the following optimization
\begin{align}\label{eq:state general bound}
E_{D,\to}(\rho_{AB}) \le  \inf_{\rho_{A BF}}
I(A\rangle BF)_{\rho}+4 \eta\left(\rho\right)_{A|BF} \log |\widehat E|+ 2g(\eta\left(\rho\right)_{A|BF}),
\end{align}
where the optimization is over $\rho_{A BF}=\sum_{j=1}^k \tau_{AB}^j \ox \sigma_F^j$ with flag quantum states $\sigma_F^j$  and  $\phi_{ABF\widehat E}$ is the purification of $\rho_{ABF}$. 

As the above bound may still be difficult to compute, we introduce a more tractable bound via adding more restrictions on the flag structure. 
\begin{proposition}
Let $\rho_{AB}$ be a bipartite state with substate decomposition $\rho_{AB} =  \tau_{AB} + \omega_{AB}$, we have	
\begin{align}\label{eq:state pure flag bound}
	 E_{D,\to}(\rho_{AB}) \le  \inf_{0\le \alpha\le 1} s(\tau_{AB} ,\omega_{AB},\alpha),
	 \end{align}
	 with
	 \begin{align} \label{eq:function s}
     s(\tau_{AB} ,\omega_{AB},\alpha):= I(A\rangle BF)_{\rho}+4 \eta\left(\rho\right)_{A|BF} \log |\widehat E|+ 2g(\eta\left(\rho\right)_{A|BF}),
	\end{align}
where $\rho_{ABF}= \tau_{AB} \otimes \proj{\psi_\a} + \omega_{AB} \otimes \proj 0$, $\ket{\psi_\a} = \sqrt \alpha \ket 0+\sqrt{1-\alpha}\ket1$,  and $\phi_{ABF\widehat E}$ is the purification of $\rho_{A BF}$.  
\end{proposition}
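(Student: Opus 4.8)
The plan is to prove the bound by exhibiting, for each $\alpha\in[0,1]$, a single concrete flag extension $\rho_{ABF}$ of $\rho_{AB}$ that is admissible in the framework leading to Eq.~\eqref{eq:state general bound}, applying the two-line chain of inequalities established just above it, and finally taking the infimum over the single parameter $\alpha$. Thus the proposition is really a specialization of the general flag bound to a two-component decomposition with a qubit flag, made tractable by fixing one flag and parametrizing the other by a single overlap.

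First I would fix $\alpha\in[0,1]$ and set $\rho_{ABF}=\tau_{AB}\otimes\proj{\psi_\alpha}+\omega_{AB}\otimes\proj 0$ with $\ket{\psi_\alpha}=\sqrt\alpha\ket 0+\sqrt{1-\alpha}\ket 1$, and then verify that this is a legitimate extension of $\rho_{AB}$. Since $\tau_{AB},\omega_{AB}\ge 0$ and $\proj{\psi_\alpha},\proj 0\ge 0$, the operator $\rho_{ABF}$ is positive semidefinite, and $\tr\rho_{ABF}=\tr\tau_{AB}+\tr\omega_{AB}=\tr\rho_{AB}=1$, so it is a valid state on $ABF$. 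Crucially, tracing out the qubit flag gives $\tr_F\rho_{ABF}=\tau_{AB}\tr\proj{\psi_\alpha}+\omega_{AB}\tr\proj 0=\tau_{AB}+\omega_{AB}=\rho_{AB}$, so $F$ is a system on Bob's side that he may simply discard to recover $\rho_{AB}$.

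With admissibility in hand, the bound is immediate from the chain preceding Eq.~\eqref{eq:state general bound}. The first inequality $E_{D,\to}(\rho_{AB})\le E_{D,\to}(\rho_{ABF})$ holds because Bob can discard $F$, and the second is the approximate-degradability bound (Theorem~2.12 of~\cite{Leditzky2017}) applied to $\rho_{ABF}$, which yields exactly $I(A\rangle BF)_\rho+4\eta(\rho)_{A|BF}\log|\widehat E|+2g(\eta(\rho)_{A|BF})=s(\tau_{AB},\omega_{AB},\alpha)$, where $\widehat E$ purifies $\rho_{ABF}$ through $\phi_{ABF\widehat E}$. Since this holds for every $\alpha\in[0,1]$, taking the infimum over $\alpha$ gives $E_{D,\to}(\rho_{AB})\le\inf_{0\le\alpha\le 1}s(\tau_{AB},\omega_{AB},\alpha)$, as claimed.

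I expect no genuine analytic obstacle to the inequality itself, since it is a restriction of the already-established flag bound to the family $\{\,\proj{\psi_\alpha},\proj 0\,\}$; the actual content is that this restriction is tractable, being controlled by a single scalar $\alpha$. The point worth checking carefully is that the two flags are in general non-orthogonal, coinciding at $\alpha=1$ (so $F$ becomes trivial) and orthogonal only at $\alpha=0$ (a fully distinguishing classical flag), so one should confirm that $\eta(\rho)_{A|BF}$ and $I(A\rangle BF)_\rho$ remain well defined and that the purifying dimension obeys $|\widehat E|\le\Rank\tau_{AB}+\Rank\omega_{AB}$. It is precisely this freedom to tune the overlap $\braket{\psi_\alpha}{0}=\sqrt\alpha$ that lets the optimization interpolate between the trivial and the classical flag and thereby sharpen the resulting upper bound.
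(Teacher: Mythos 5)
Your proof is correct and follows essentially the same route as the paper: the proposition is obtained by specializing the general flag bound (Eq.~\eqref{eq:state general bound}) to the two-flag family $\{\proj{\psi_\alpha},\proj{0}\}$, using the same two-step chain --- $E_{D,\to}(\rho_{AB})\le E_{D,\to}(\rho_{ABF})$ since Bob may discard $F$, followed by the approximate-degradability bound of Theorem~2.12 in~\cite{Leditzky2017} --- and then taking the infimum over $\alpha$. Your explicit check that $\rho_{ABF}$ is a valid state with $\tr_F\rho_{ABF}=\rho_{AB}$ is a detail the paper leaves implicit, but it introduces no deviation from the paper's argument.
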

Note that the objective function $s(\tau_{AB},\omega_{AB},\alpha)$  can be computed efficiently since the degradability parameter $\eta(\rho)_{A|BF}$ can be computed via SDP. Then we could do brute-force search on $\a\in[0,1]$ to find the optimal solution. Moreover, other choices of the parametrized flag structure also yield upper bounds on the one-way distillable entanglement.

 \subsection{Improved upper bounds for the depolarizing channel}
We further extend the above results on one-way distillable entanglement to the evaluation of the quantum capacity of the depolarizing channel. Based on the fact that the quantum capacity of a teleportation-simulable channel is equal to the one-way distillable entanglement of its Choi state \cite{Bennett1996c}, we establish new upper bounds on the quantum capacity of the depolarizing channel. Here, a quantum channel  is called teleportation-simulable channel if it can be simulated by the action of a teleportation protocol \cite{Bennett1993,BK98,Werner2001a} on a resource state  shared between the sender and receiver. And the teleportation-simulable channels were also called Choi-stretchable in \cite{Pirandola2015b}, which studied more general simulation protocols consisting of LOCC operations.
Moreover, the Choi state of $\cN_{A\to B}$ is given by $\rho_{\cN} = J_{\cN}/d_A$, where $J_{\cN}=\sum_{ij} \ketbra{i_A}{j_{A}} \ox \cN(\ketbra{i_{A'}}{j_{A'}})$ is the Choi-Jamio\l{}kowski matrix of $\cN$, and $\{\ket{i_A}\}$, $\{\ket{i_{A'}}\}$ are orthonormal bases on isomorphic Hilbert spaces $A$ and $A'$, respectively.
 
For a teleportation-simulable quantum channel $\cN$ with Choi state $\rho_\cN =\sum_{j=1}^k \tau_{AB}^j$ and $\tau_{AB}^j\ge0$ for each $j$, the bound in Eq.~\eqref{eq:state general bound} yields the following upper bound
\begin{align}
Q(\cN)  \le \inf_{\rho_{ABF}} I(A\rangle BF)_{\rho}+4 \eta\left(\rho\right)_{A|BF} \log |E|+ 2g(\eta\left(\rho\right)_{A|BF}),
\end{align}
where $\rho_{A BF}=\sum_{j=1}^k \tau_{AB}^j \ox \sigma_F^j$ and $\sigma_F^j$ are the flag states. Here, $E$ is the system required to purify $\rho_{A BF}$.
In particular, for a teleportation-simulable channel with Choi state $\rho_\cN =  \tau_{AB} +  \omega_{AB}$, the efficiently computable bound in Eq.~\eqref{eq:state pure flag bound} leads to 
\begin{align}
	 Q(\cN) \le   \inf_{0\le \alpha\le 1} s(\tau_{AB},\omega_{AB},\alpha), \label{eq:channel state bound}
\end{align}
with 
$s(\tau_{AB} ,\omega_{AB},\alpha)$ defined in Eq.~\eqref{eq:function s}. 

As a notable application, we obtain improved upper bounds on the quantum capacity for the depolarizing channel.
Since the depolarizing channel is teleportation-simulable, we apply our bound in Eq.~\eqref{eq:channel state bound} to evaluate its quantum capacity. 
For the qubit depolarizing channel $\cD_p(\rho):=(1-p)\rho + p \tr(\rho)I_2/2$, its Choi state has the following substate decomposition
\begin{align}
	\rho_{\cD_p} = (1-p)\Phi + pI/4,
\end{align}
with $\ket{\Phi} = \frac{1}{\sqrt 2}(\ket{00} + \ket{11})$ and $I$ is the identity operator. Taking $\tau_{AB} = (1-p)\Phi$ and $\omega_{AB} = pI/4$, we are able to establish the upper bound on the quantum capacity as follows
\begin{align}
Q(\cD_p) \le \inf_{0\le \alpha\le 1} s((1-p)\Phi,pI/4,\alpha). \label{eq:depo bound}
\end{align}

For the low-noise depolarizing channels, our bound outperforms previously best known upper bounds via approximate degradable channels~\cite{Sutter2014} (cf.~Figure.~\ref{fig:depo low noise}). We also note that there is an analytical upper bound on the quantum capacity of the depoarizing channel in the low-noise regime \cite{Leditzky2017a}.
And for the depolarizing channels with intermediate noise, our bound outperforms previously best known upper bounds obtained via considering flag states~\cite{Fanizza2019} as well as the bound obtained via degradable and anti-degradable decompositions~\cite{Leditzky2017} (cf.~Figure.~\ref{fig:depo middle noise}). The code for computing the bound in Eq.~\eqref{eq:depo bound} is available in \cite{Wang_codes}.

\begin{figure}[H]
\centering
  \begin{tikzpicture}
	\begin{axis}[
		height=7.5cm,
		width=9.8cm,
		grid=major,
		xlabel=$3p/4$,
		xmin=0,
		xmax=0.02,
		ymax=1.0,
		ymin=0.82,
	    xtick={0,0.005,0.01,0.015,0.02},
        ytick={1,0.98,0.96,0.94,0.92,0.9,0.88,0.86,0.84,0.82},
		legend style={at={(0.67,0.987)},anchor=north,legend cell align=left,font=\footnotesize} 
	]
	
		\addplot[NavyBlue,very thick,smooth] coordinates {
  (0,	1)
(0.001,	0.987027121)
(0.002,	0.976248321)
(0.003,	0.966275914)
(0.004,	0.956911126)
(0.005,	0.947941759)
(0.006,	0.939371622)
(0.007,	0.931122385)
(0.008,	0.92315712)
(0.009,	0.91544798)
(0.01,	0.907973172)
(0.011,	0.900715137)
(0.012,	0.893659418)
(0.013,	0.886793875)
(0.014,	0.880108158)
(0.015,	0.873593331)
(0.016,	0.867241589)
(0.017,	0.861046224)
(0.018,	0.855000743)
(0.019,	0.849100015)
(0.02,	0.843338955)

	};
		\addlegendentry{Sutter et al.~\cite{Sutter2014}}	
		
		\addplot[dashed,BrickRed,very thick,smooth] coordinates {
(0,	1)
(0.001,	0.987045866)
(0.002,	0.976123635)
(0.003,	0.965981791)
(0.004,	0.956412704)
(0.005,	0.947254792)
(0.006,	0.938274697)
(0.007,	0.929714078)
(0.008,	0.921399867)
(0.009,	0.913156357)
(0.01,	0.905173372)
(0.011,	0.897435373)
(0.012,	0.889750834)
(0.013,	0.882164705)
(0.014,	0.874872259)
(0.015,	0.867691901)
(0.016,	0.860358221)
(0.017,	0.853431295)
(0.018,	0.846617271)
(0.019,	0.83954677)
(0.02,	0.83293071)

	};
		\addlegendentry{new upper bound~\eqref{eq:depo bound}}

		\addplot[dotted,ForestGreen,very thick,smooth] coordinates {
(0,	1)
(0.001,	0.98700728)
(0.002,	0.976016004)
(0.003,	0.965781061)
(0.004,	0.95603779)
(0.005,	0.946660495)
(0.006,	0.937575145)
(0.007,	0.92873282)
(0.008,	0.920098755)
(0.009,	0.911646954)
(0.01,	0.903357239)
(0.011,	0.895213493)
(0.012,	0.88720254)
(0.013,	0.879313412)
(0.014,	0.871536832)
(0.015,	0.863864852)
(0.016,	0.856290589)
(0.017,	0.848808018)
(0.018,	0.841411829)
(0.019,	0.8340973)
(0.02,	0.826860207)
	};
	\addlegendentry{$Q^{(1)}$}
	

	\end{axis}  

\end{tikzpicture}
\caption{Upper and lower bounds on the quantum capacity $Q(\cD_p)$ of the low-noise qubit depolarizing channel for the interval $3p/4 \in\left[0,0.02\right]$. The channel's coherent information $Q^{(1)}$ (green, dotted) provides a lower bound on the quantum capacity $Q(\cD_p)$. The red dashed line depicts our bound in Eq.~\eqref{eq:depo bound}, which improves the bound via approximate degradable channels obtained in~\cite{Sutter2014}.
		}
\label{fig:depo low noise}
\end{figure}
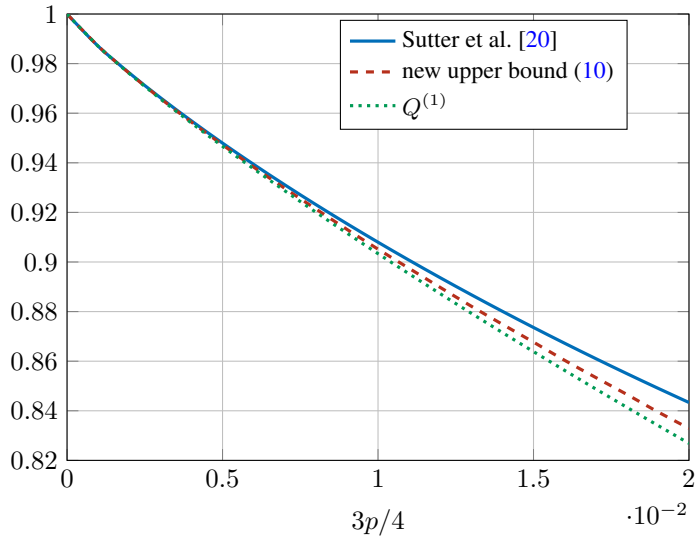 


\begin{figure}[H]
\centering
  \begin{tikzpicture}
	\begin{axis}[
		height=7.5cm,
		width=9.8cm,
		grid=major,
		xlabel=$3p/4$,
		xmin=0,
		xmax=0.15,
		ymax=1.0,
		ymin=0.1,
	    xtick={0,0.05,0.1,0.15},
        ytick={1,0.9,0.8,0.7,0.6,0.5,0.4,0.3,0.2,0.1},
		legend style={at={(0.68,0.995)},anchor=north,legend cell align=left,font=\footnotesize} 
	]
	
		\addplot[YellowOrange,very thick,smooth] coordinates {
 (0,	1)
(0.005,	0.954556437)
(0.01,	0.91908048)
(0.015,	0.887334439)
(0.02,	0.857984146)
(0.025,	0.830390602)
(0.03,	0.804173284)
(0.035,	0.779077409)
(0.04,	0.754919225)
(0.045,	0.73155926)
(0.05,	0.708887719)
(0.055,	0.686815853)
(0.06,	0.66527052)
(0.065,	0.644190596)
(0.07,	0.62352447)
(0.075,	0.603228248)
(0.08,	0.583264389)
(0.085,	0.563600651)
(0.09,	0.54420925)
(0.095,	0.525066174)
(0.1,	0.506150613)
(0.105,	0.487444487)
(0.11,	0.468932047)
(0.115,	0.450599544)
(0.12,	0.432434939)
(0.125,	0.41442767)
(0.13,	0.396568447)
(0.135,	0.378849075)
(0.14,	0.361262318)
(0.145,	0.343801763)
(0.15,	0.326461723)

	};
		\addlegendentry{Leditzky et al.~\cite{Leditzky2017}}

		\addplot[NavyBlue,very thick,smooth] coordinates {
 
(0,	1)
(0.005,	0.95167862)
(0.01,	0.913430104)
(0.015,	0.879029634)
(0.02,	0.847154655)
(0.025,	0.817177459)
(0.03,	0.788728665)
(0.035,	0.761564487)
(0.04,	0.735511905)
(0.045,	0.710441779)
(0.05,	0.686254078)
(0.055,	0.662869083)
(0.06,	0.640221817)
(0.065,	0.618258345)
(0.07,	0.596933219)
(0.075,	0.57620766)
(0.08,	0.556048223)
(0.085,	0.536425798)
(0.09,	0.517314852)
(0.095,	0.498692832)
(0.1,	0.480539702)
(0.105,	0.462837567)
(0.11,	0.445570373)
(0.115,	0.428723661)
(0.12,	0.41228436)
(0.125,	0.396240625)
(0.13,	0.380581689)
(0.135,	0.365297744)
(0.14,	0.350379845)
(0.145,	0.335819814)
(0.15,	0.321610175)

	};
		\addlegendentry{Fannizza et al.~\cite{Fanizza2019}}	
		
		\addplot[dashed,BrickRed,very thick,smooth] coordinates {
 (0,	1)
(0.005,	0.947254792)
(0.01,	0.905173372)
(0.015,	0.867691901)
(0.02,	0.83293071)
(0.025,	0.800808926)
(0.03,	0.770116996)
(0.035,	0.741494282)
(0.04,	0.714285493)
(0.045,	0.687701625)
(0.05,	0.662820022)
(0.055,	0.638948594)
(0.06,	0.615997072)
(0.065,	0.593885673)
(0.07,	0.572548872)
(0.075,	0.551928563)
(0.08,	0.531973692)
(0.085,	0.512636524)
(0.09,	0.494740668)
(0.095,	0.47654237)
(0.1,	0.458849542)
(0.105,	0.442541949)
(0.11,	0.425788193)
(0.115,	0.410389579)
(0.12,	0.395408966)
(0.125,	0.379874674)
(0.13,	0.365654592)
(0.135,	0.351787403)
(0.14,	0.33825308)
(0.145,	0.325033465)
(0.15,	0.312110204)
	};
		\addlegendentry{new upper bound~\eqref{eq:depo bound}}

		\addplot[dotted,ForestGreen,very thick,smooth] coordinates {
(0,	1)
(0.005,	0.946660495)
(0.01,	0.903357239)
(0.015,	0.863864852)
(0.02,	0.826860207)
(0.025,	0.791715006)
(0.03,	0.758059267)
(0.035,	0.725648586)
(0.04,	0.694309311)
(0.045,	0.663911654)
(0.05,	0.634354918)
(0.055,	0.605558703)
(0.06,	0.577457331)
(0.065,	0.549996149)
(0.07,	0.523128974)
(0.075,	0.496816268)
(0.08,	0.47102381)
(0.085,	0.445721691)
(0.09,	0.420883558)
(0.095,	0.396486014)
(0.1,	0.372508156)
(0.105,	0.348931198)
(0.11,	0.325738167)
(0.115,	0.302913659)
(0.12,	0.280443635)
(0.125,	0.258315244)
(0.13,	0.23651669)
(0.135,	0.215037102)
(0.14,	0.193866438)
(0.145,	0.172995392)
(0.15,	0.15241532)

	};
	\addlegendentry{$Q^{(1)}$}
	

	\end{axis}  

\end{tikzpicture}
\caption{Upper and lower bounds on the quantum capacity $Q(\cD_p)$ of the intermediate-noise qubit depolarizing channel for the interval $3p/4\in\left[0,0.15\right]$. The channel's coherent information $Q^{(1)}$ (green, dotted) provides a lower bound on the quantum capacity $Q(\cD_p)$. 
		The red dashed line depicts our bound in Eq.~\eqref{eq:depo bound}, which improves the bound obtained via mixed flag states~\cite{Fanizza2019} (blue, solid)  as well as the bound obtained in \cite{Leditzky2017} (yellow, solid).
		}
\label{fig:depo middle noise}
\end{figure}
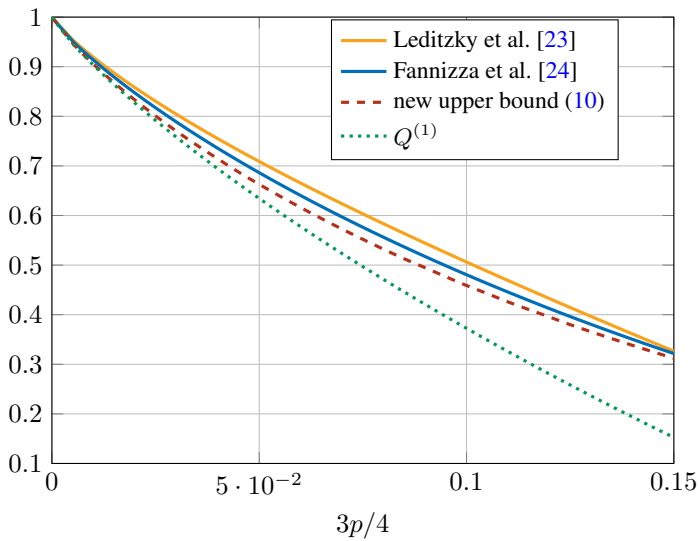

\subsection{Upper bounds on the quantum capacity of a general channel}

Degradable quantum channels are among the only channels whose quantum capacity is known \cite{Devetak2005}. A channel $\cN_{A\to B}$ is called degradable if there exists another quantum channel  $\cD_{B \to E}$ such that 
$\cN_{A\to E}^{c}= \cD_{B \to E} \circ \cN_{A\to B}$, 
where the complementary channel $\cN_{A\to E}^{c}(\rho):= \tr_B V\rho V^\dagger$ and $V$ is an isometry that ensures $\cN_{A\to B}(\rho):= \tr_E V\rho V^\dagger$. Here, the channel $\cD_{B \to E}$ is called degrading channel.
For degradable quantum channels,  its quantum capacity is given by the channel's coherent information \cite{Devetak2005}. 

In a previous work, Sutter et al.~\cite{Sutter2014} made use of the continuity bound of channel capacities~\cite{Leung2009} to estimate the quantum capacity of approximate degradable channels and gave previously tightest upper bound for low noise depolarizing channels. 
By decomposing the depolarizing channel to the convex combination of degradable and anti-degradable channels, Leditzky et al.~\cite{Leditzky2017} established the state-of-the-art upper bound in the high-noise regime. Recently, Fanizza et al.~\cite{Fanizza2019} showed that the degradable extensions with non-orthogonal quantum flags could be used to obtain better upper bound on the quantum capacity of the intermediate-noise qubit depolarizing channel. 

The intuition behind the \textit{quantum flags}~\cite{Fanizza2019} can be traced back to environment-assisted quantum information processing~\cite{Gregoratti2003,Smolin2005,Winter2005}, where an environment is assumed to ``friendly'' provide partial helpful information to the receiver. To be specific, for a given noisy quantum channel $\cN_{A\to B}=\sum_j p_j\cN_j$ with probability distribution $\{p_j\}$ and individual CPTP maps $\cN_j$, let us assume that the state of the partial environment is $\sigma_j$ for each $\cN_j$ and the receiver has access to this partial environment. In this case, the complete channel can be written as	$\widehat\cN =\sum_j p_j\cN_j \otimes \sigma_j$. With such assistance, the receiver Bob receives a quantum flag $\sigma_j$ which encodes the information about which channel is acting. Thus the flagged channel can certainly convey more quantum information than the original channel $\cN$, i.e.,
\begin{align} \label{eq:flag bound}
	Q(\cN) \le Q(\widehat\cN).
\end{align}
Note that a special kind of flagged channel with orthogonal flags was considered in \cite{Smith2008b,Ouyang2014}. However, the flag states are not necessarily orthogonal.


To better exploit the power of the extended or flagged quantum channel, we introduce a more general and efficient approach that optimizes the flagged states and utilizes approximate degradable quantum channels. The central idea of our approach is to consider the optimization of all degrading channels with some error tolerance rather than manually constructing the degrading channel.  We also introduce a more general quantum flag model involving the decomposition of a quantum channel to the sum of a set of CP maps.

Given a quantum channel $\cN_{A\to B}$, its degradability parameter \cite{Sutter2014} is defined to be
\begin{align}
\eta(\cN_{A\to B}) :=  \min_{\cD_{B \to E}}\frac{1}{2} \|\cN_{A\to E}^c - \cD_{B \to E}\circ \cN\|_\diamond,
\end{align}
where  $\|\cF\|_\di:=\sup_{k\in \mathbb N} \sup_{\|X\|_1 \leq 1}\|(\cF\ox \id_k)(X)\|_1$ denotes the diamond norm. Moreover, a quantum channel $\cN$ is called $\ve$-degradable if $\eta(\cN)=\ve$, and $\eta(\cN)$ can be efficiently computed via semidefinite programming~\cite{Sutter2014}. More details are provided in the Appendix~\ref{sec:app A}.

For approximate degradable channels, a useful property related to quantum capacity was established in \cite{Sutter2014}: let $\cN_{A\to B}$ be a $\ve$-degradable quantum channel whose environment dimension is $d_E$, then
\begin{align}\label{eq:approx bound}
	Q(\cN) \le Q^{(1)}(\cN) + {\varepsilon} \log (d_E-1)+h\left({\varepsilon}\right)+ 2\varepsilon \log  d_E+g(\ve),
\end{align}
where $g\left(p\right)= (1+p)h({p}/{(1+p)})$ is the bosonic entropy and $h\left(p\right)=-p\log p - (1-p)\log(1-p)$
 is the binary entropy.
Using this property, we further introduce the approximate degradable flag channels and obtain the following upper bound on quantum capacity. Here we use CP map decomposition of the channel instead of the decomposition of CPTP maps since the latter is more restricted but not necessary in this task. 
\begin{proposition}[General flag approximate degradable bound]\label{prop:q general}
Let $\cN$ be a quantum channel with CP map decomposition $\cN=\sum_{j=0}^k\cN_j$. The quantum capacity of $\cN$ is bounded by
\begin{align}
	Q(\cN) \le \inf_{\sigma_0, \cdots, \sigma_k} Q^{(1)}(\widehat\cN) + {\eta(\widehat\cN)} \log (d_E-1)+h\left({\eta(\widehat\cN)}\right)+2\eta(\widehat\cN) \log  d_E+  g\left(\eta(\widehat\cN)\right).
\end{align}
where $\widehat\cN (\cdot)= \sum_{j=0}^k \cN_j(\cdot) \otimes \sigma_j$ and $d_E$ is the dimension of  the output system of $\widehat\cN^c$ that depends on the number of the flags and the dimension of the flag states.  
\end{proposition}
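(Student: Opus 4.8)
The plan is to reduce the claim to the two ingredients already available in the excerpt: the flag monotonicity \eqref{eq:flag bound} and the approximate-degradability converse \eqref{eq:approx bound}. Concretely, I would (i) check that the flagged map $\widehat\cN$ is a legitimate CPTP channel and that discarding the flag recovers $\cN$, so that $Q(\cN)\le Q(\widehat\cN)$ holds even when the $\cN_j$ are merely CP; (ii) apply \eqref{eq:approx bound} to $\widehat\cN$ with $\varepsilon=\eta(\widehat\cN)$; and (iii) optimize over the flag states by taking an infimum.

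For step (i), first note that $\widehat\cN$ is completely positive because each $\cN_j$ is CP and each flag $\sigma_j\ge 0$, so $\widehat\cN$ is a sum of tensor products of CP maps with positive operators. It is trace-preserving since, for any input $\rho$,
\begin{align}
\tr\widehat\cN(\rho)=\sum_{j=0}^k \tr\cN_j(\rho)\,\tr\sigma_j=\sum_{j=0}^k\tr\cN_j(\rho)=\tr\cN(\rho)=\tr\rho,
\end{align}
using $\tr\sigma_j=1$ and the defining identity $\sum_{j}\cN_j=\cN$. Hence $\widehat\cN$ is CPTP. Next, tracing out the flag register $F$ gives $\tr_F\circ\widehat\cN=\sum_j\cN_j(\cdot)\,\tr\sigma_j=\cN$, so $\cN$ is obtained from $\widehat\cN$ by post-composition with the CPTP partial trace over $F$. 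Since the quantum capacity is monotone under post-composition with a CPTP map (data processing), this yields $Q(\cN)\le Q(\widehat\cN)$. This is exactly the CP-map version of \eqref{eq:flag bound}; passing from a CPTP decomposition $\cN=\sum_j p_j\cN_j$ to a general CP decomposition only amounts to absorbing the weights $p_j$ into the $\cN_j$, and the partial-trace argument is unaffected.

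For steps (ii)--(iii), by the definition of the degradability parameter the channel $\widehat\cN$ is $\eta(\widehat\cN)$-degradable, so the converse \eqref{eq:approx bound} applied to $\widehat\cN$, with $d_E$ the output dimension of $\widehat\cN^c$, gives
\begin{align}
Q(\widehat\cN)\le Q^{(1)}(\widehat\cN)+\eta(\widehat\cN)\log(d_E-1)+h(\eta(\widehat\cN))+2\eta(\widehat\cN)\log d_E+g(\eta(\widehat\cN)).
\end{align}
Chaining this with $Q(\cN)\le Q(\widehat\cN)$ from step (i) establishes the bound for every fixed choice of flags $\sigma_0,\dots,\sigma_k$. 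Because the right-hand side depends on the $\sigma_j$ only through $Q^{(1)}(\widehat\cN)$ and $\eta(\widehat\cN)$, while the left-hand side $Q(\cN)$ is independent of the flags, the inequality survives taking the infimum over all admissible flag tuples, which is the stated result.

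The genuinely nontrivial points, rather than the chaining itself, are the two structural inputs. The first is verifying the flag monotonicity in the CP (as opposed to CPTP) setting: this is the only place the generalization beyond \eqref{eq:flag bound} enters, and it is where I would be most careful, though as noted the partial-trace / data-processing argument goes through verbatim. The second is the precise identification of the environment dimension $d_E$ of the complementary channel $\widehat\cN^c$: since the flags enlarge Bob's output, one must count the Kraus operators of $\widehat\cN$ (equivalently, bound the rank of its Choi operator) to see how $d_E$ grows with the number $k+1$ of flags and their dimensions. This does not affect the logical skeleton of the proof but fixes the numerical coefficients, and is precisely what makes the infimum over flag states worth optimizing.
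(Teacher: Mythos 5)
Your proof is correct and follows essentially the same route as the paper's: chain the flag monotonicity $Q(\cN)\le Q(\widehat\cN)$ with the approximate-degradability bound \eqref{eq:approx bound} applied to $\widehat\cN$ with $\varepsilon=\eta(\widehat\cN)$, then take the infimum over the flag states. The only difference is that you spell out the CPTP verification and the partial-trace/data-processing argument (including the harmless passage from CPTP to CP decompositions) that the paper compresses into the remark that ``$\widehat\cN$ is an extension of the original channel $\cN$.''
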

\begin{proof}
The key idea here is to consider the  approximate degradable flag channels. By evaluating the degradability parameter of the flag channel and applying the continuity upper bound on the quantum capacity, we have
\begin{align}
	Q(\cN) &\le \inf_{\sigma_0,\cdots \sigma_k} Q(\widehat \cN) \label{eq:relax to flag} \\
		   &\le \inf_{\sigma_0, \cdots,\sigma_k} Q^{(1)}(\widehat \cN) + {\varepsilon} \log (d_E-1)+h\left({\varepsilon}\right)+2\varepsilon \log  d_E+g\left({\varepsilon}\right),\label{eq:relax approx bound}
\end{align}
where $\ve$ in Eq.~\eqref{eq:relax approx bound} is the degradability parameter determined by $\eta(\widehat\cN)$. The first inequality is due to the fact that $\widehat\cN$ is an extension of the original channel $\cN$, which allows Bob to obtain extra information from the environment. The second inequality is due to Eq.~\eqref{eq:approx bound}.
\end{proof}

Computing the above bound may be extremely hard. However, we could restrict the number of flags and parametrize the flag states and then optimize these parameters. In particular, for the case of two pure flag states, we are able to establish the following upper bound. As the system of the flag states is at the receiver Bob's side, then he could locally do any unitary on the system. Thus, without loss of generality, we could fix one flag state $\sigma_1 = \proj 0$ and assume that $\sigma_0 =\proj {\psi_a}$.
\begin{proposition}[Pure flag approximate degradable bound]\label{prop:pure state bound}
Let $\cN$ be a quantum channel with CP map decomposition $\cN=\cN_0+\cN_1$. The quantum capacity of $\cN$ is bounded by the following optimized pure flag upper bound:
	\begin{align}\label{eq:QPF bound}
	 Q_{pf}(\cN):= \inf_{0\le \alpha\le 1} f(\cN_0,\cN_1,\alpha),
	 \end{align}
	 with
	 \begin{align*}
     f(\cN_0,\cN_1,\alpha):= Q^{(1)}(\widehat\cN) + {\eta(\widehat\cN)} \log (d_E-1)+h\left({\eta(\widehat\cN)}\right)+2\eta(\widehat\cN) \log  d_E
     + g\left(\eta(\widehat\cN)\right),
	\end{align*}
where $\widehat\cN (\cdot)=\cN_0(\cdot) \otimes \proj{\psi_\a} + \cN_1(\cdot) \otimes \proj 0$, $\ket{\psi_\a} = \sqrt \alpha \ket 0+\sqrt{1-\alpha}\ket1$, and $d_E$ is the dimension of  the output system of $\widehat\cN^c$ that depends on the flag states. 
\end{proposition}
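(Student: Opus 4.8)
The plan is to derive this statement as a direct specialization of Proposition~\ref{prop:q general} to the two-term CP decomposition $\cN = \cN_0 + \cN_1$, combined with the observation that restricting the flag optimization to \emph{pure} states still produces a valid (if possibly looser) upper bound. First I would recall that the proof of Proposition~\ref{prop:q general} actually establishes, for \emph{any} fixed choice of flag states $\sigma_0,\sigma_1$, the chain $Q(\cN)\le Q(\widehat\cN)$ (since $\widehat\cN$ is an environment-assisted extension of $\cN$) followed by the approximate-degradability estimate~\eqref{eq:approx bound} applied to $\widehat\cN$. The point is that these two inequalities hold \emph{before} any infimum is taken, so for every admissible pair $(\sigma_0,\sigma_1)$ one already obtains a concrete numerical upper bound on $Q(\cN)$.

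The substantive step is to parametrize all pairs of pure flag states by a single real number. Take $\sigma_0 = \proj{\phi_0}$ and $\sigma_1 = \proj{\phi_1}$. Because $\ket{\phi_0}$ and $\ket{\phi_1}$ span at most a two-dimensional subspace, there is a unitary $U_F$ on Bob's flag register sending $\ket{\phi_1}\mapsto\ket 0$ and $\ket{\phi_0}\mapsto \sqrt\alpha\,\ket 0 + \sqrt{1-\alpha}\,\ket 1 = \ket{\psi_\alpha}$, where $\alpha = |\braket{\phi_0}{\phi_1}|^2\in[0,1]$ once the relative phase has been absorbed into the (physically irrelevant) global phase of $\ket{\phi_0}$. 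Since the flag register lies entirely on the receiver's side, the canonical flagged channel is obtained from $\widehat\cN$ by post-composition with the fixed output unitary $I_B\otimes U_F$, which the receiver can always invert; hence the two channels have the \emph{same} quantum capacity. Up to this capacity-preserving relabeling, every pair of pure flag states is therefore of the canonical form $\sigma_0=\proj{\psi_\alpha}$, $\sigma_1=\proj 0$ appearing in the statement, and the overlap $\alpha$ is the only remaining degree of freedom.

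With this reduction the proof assembles immediately. For each $\alpha\in[0,1]$ the canonical choice $\widehat\cN(\cdot)=\cN_0(\cdot)\otimes\proj{\psi_\alpha}+\cN_1(\cdot)\otimes\proj 0$ is a legitimate flagged extension of $\cN$, so the extension inequality together with~\eqref{eq:approx bound} gives $Q(\cN)\le f(\cN_0,\cN_1,\alpha)$, where $d_E$ is the environment dimension of $\widehat\cN^c$ fixed by the flag structure. As this holds for every $\alpha$, taking the infimum over $\alpha\in[0,1]$ yields $Q(\cN)\le \inf_{0\le\alpha\le1} f(\cN_0,\cN_1,\alpha)=Q_{pf}(\cN)$, which is precisely the claimed bound.

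I expect the only genuinely delicate point to be the canonical-form argument of the second paragraph: one must verify that the local unitary on the flag register is capacity-preserving, so that optimizing over all pure flags collapses to optimizing over the single overlap parameter $\alpha$, and that the inner-product matching $\alpha=|\braket{\phi_0}{\phi_1}|^2$ is achievable after fixing phases. A minor accompanying check is that the environment dimension $d_E$ entering $f$ is unambiguously the output dimension of $\widehat\cN^c$ for each $\alpha$ (constant for distinct flags in the interior, with possible degeneration only at the endpoints $\alpha\in\{0,1\}$), so that $f(\cN_0,\cN_1,\alpha)$ is well defined along the family. Once these are settled, the rest is a routine substitution into Proposition~\ref{prop:q general} with $k=1$.
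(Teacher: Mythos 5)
Your proposal is correct and follows essentially the same route as the paper: the paper likewise obtains Proposition~\ref{prop:pure state bound} by specializing Proposition~\ref{prop:q general} (the chain $Q(\cN)\le Q(\widehat\cN)\le$ bound~\eqref{eq:approx bound}, valid for each fixed flag choice) to two pure flags, and invokes the same observation that a local unitary on Bob's flag register is capacity-preserving, so one may fix $\sigma_1=\proj{0}$ and $\sigma_0=\proj{\psi_\alpha}$ without loss of generality. Your additional checks (the overlap parametrization $\alpha=|\braket{\phi_0}{\phi_1}|^2$ and the well-definedness of $d_E$) are sound refinements of the paper's brief textual justification.
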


Note that the objective function $f(\cN_0,\cN_1,\alpha)$ in Proposition~\ref{prop:pure state bound} can be computed efficiently if the  coherent information of $\widehat{\cN}$ can be computed. Then we could do brute-force search on $\a\in[0,1]$ to find the almost optimal flag states in this case and obtain the almost optimal pure flag upper bound on quantum capacity. Similarly, for the case $\sigma_0=\proj{e_1}$, $\sigma_1=c\proj{e_1}+(1-c)\proj{e_1^\perp}$ considered in \cite{Fanizza2019}, we could also use the brute-force search approach to find the almost optimal quantum flag upper bound.	We also note that the above bounds may be slightly improved by using a slightly tighter continuity bound in Theorem 7 of \cite{Sutter2014}. 

Built on previously works~\cite{Sutter2014,Fanizza2019}, our bound in Eq.~\eqref{eq:QPF bound}, on one hand, improves the approximate degradable bound by allowing more freedom or space in optimizing or searching the approximate degradable channels. On the other hand, our method improves the previous quantum flag upper bound by involving the optimization over the flagged channels. 

Furthermore, we observe that one can only optimize over the flagged channels which are degradable and obtain the following upper bound on quantum capacity. This bound can be efficiently computed since the coherent information of degradable channels can be efficiently optimized~\cite{Fawzi2017a}. Note that we could also use the extendibility measures in \cite{WWW19} to replace the condition $\eta(\widehat \cN)=0$ due to the fact that the main channel is degradable if and only if its complementary channel is anti-degradable (or two-extendible).
\begin{corollary}[Pure flag degradable bound]\label{prop:Q bound special}
	Let $\cN$ be a quantum channel with CP map decomposition $\cN=\cN_0+\cN_1$. The   quantum capacity of $\cN$ satisfies that
	\begin{align}\label{eq:Q pf deg}
	Q(\cN)\le	\widetilde Q_{pf}(\cN):= \inf_{0\le \alpha\le 1} \left\{ Q^{(1)}(\widehat\cN)\big| \eta(\widehat\cN) = 0 \right\},
	\end{align}
	where $\widehat\cN (\cdot) = \cN_0(\cdot) \otimes \proj{\psi_\a} + \cN_1(\cdot) \otimes \proj 0$ and  $\ket{\psi_\a} = \sqrt \alpha \ket 0+\sqrt{1-\alpha}\ket1$. If there is no $\widehat{\cN}$ such that $\eta(\widehat\cN) = 0$, the value of this bound is set to be infinity.
\end{corollary}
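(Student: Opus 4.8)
The plan is to obtain this corollary as the zero-error specialization of Proposition~\ref{prop:pure state bound}, in which every approximate-degradability correction term collapses. First I would fix any $\alpha\in[0,1]$ for which the associated flag channel $\widehat\cN(\cdot)=\cN_0(\cdot)\otimes\proj{\psi_\alpha}+\cN_1(\cdot)\otimes\proj0$ satisfies $\eta(\widehat\cN)=0$. Reading the definition $\eta(\widehat\cN)=\min_{\cD}\tfrac12\|\widehat\cN^c-\cD\circ\widehat\cN\|_\diamond$, the value $0$ is attained exactly when some degrading channel $\cD$ realizes $\widehat\cN^c=\cD\circ\widehat\cN$, i.e.\ precisely when $\widehat\cN$ is (exactly) degradable. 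So the admissible set over which $\widetilde Q_{pf}$ optimizes is the set of $\alpha$ making the pure-flag channel degradable.

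Next I would chain the two facts already supplied in the excerpt. Since $\widehat\cN$ is an extension of $\cN$ (Bob may discard the flag register to recover $\cN$), the flag bound~\eqref{eq:flag bound} gives $Q(\cN)\le Q(\widehat\cN)$. For the degradable channel $\widehat\cN$, the quantum capacity is single-letter and equals the coherent information~\cite{Devetak2005}, so $Q(\widehat\cN)=Q^{(1)}(\widehat\cN)$. Composing these yields $Q(\cN)\le Q^{(1)}(\widehat\cN)$ for every $\alpha$ in the admissible set.

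Finally I would minimize over that set. Because the inequality $Q(\cN)\le Q^{(1)}(\widehat\cN)$ holds for each $\alpha$ with $\eta(\widehat\cN)=0$, taking the infimum over exactly these $\alpha$ delivers $Q(\cN)\le\widetilde Q_{pf}(\cN)$. The empty-set convention is automatic: if no $\alpha$ makes $\widehat\cN$ degradable, the infimum is over the empty set and is declared $+\infty$, so the bound is vacuously valid. Equivalently, one can read the corollary straight off Proposition~\ref{prop:pure state bound}, since at $\eta(\widehat\cN)=0$ we have $h(0)=g(0)=0$ and all four correction terms $\eta\log(d_E-1)$, $h(\eta)$, $2\eta\log d_E$, $g(\eta)$ vanish, leaving $f(\cN_0,\cN_1,\alpha)=Q^{(1)}(\widehat\cN)$; restricting that infimum to the degradable slice of $[0,1]$ gives $\widetilde Q_{pf}$.

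The step needing the most care is the equivalence $\eta(\widehat\cN)=0\iff\widehat\cN$ degradable: I must confirm the minimizer in the definition of $\eta$ is genuinely attained, so that a zero value certifies an \emph{exact} degrading channel rather than merely a limiting sequence. This follows from compactness of the set of CPTP maps $\cD_{B\to E}$ together with continuity of the diamond norm, so the optimization is a true minimum and $\eta(\widehat\cN)=0$ produces an exact decomposition. Beyond this there is no real obstacle; the corollary is a specialization, and its practical value is that $Q^{(1)}(\widehat\cN)$ over degradable flag channels is efficiently computable~\cite{Fawzi2017a}.
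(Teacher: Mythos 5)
Your proposal is correct and follows essentially the same route as the paper: the corollary is obtained by restricting the optimization in Proposition~\ref{prop:pure state bound} to the degradable slice $\eta(\widehat\cN)=0$ (equivalently, chaining the flag bound~\eqref{eq:flag bound} with the fact that degradable channels satisfy $Q=Q^{(1)}$), with the empty-set case handled by the stated convention. Your extra care in verifying that $\eta(\widehat\cN)=0$ certifies an exact degrading map (via compactness of CPTP maps and continuity of the diamond norm) is a sound addition that the paper leaves implicit.
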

\begin{remark}\label{remark 1}
We note that this Corollary could be the more useful than the above proposition since it is generally not clear how to compute the coherent information of an arbitrary flagged channel. Moreover,
it is worth noting that a useful variant of Corollary \ref{prop:Q bound special} is to choose degradable decompositions and orthogonal flag states, then the flagged channel will be degradable and thus leads to efficiently computable bounds.
\end{remark}

\subsection{Improved upper bounds for the GAD channel}
We further apply our bounds to other quantum channels of theoretical and experimental interest. We establish improved upper bounds on the quantum capacity of the generalized amplitude damping channel, which are tighter than previously known upper bounds~\cite{Khatri2019} via data-processing approach~\cite{Khatri2019} and Rains information~\cite{Tomamichel2015a}.

The generalized amplitude damping (GAD) channel is one of the realistic sources of noise in superconducting quantum processor~\cite{Chirolli2018}. It can be viewed as the qubit analogue of the bosonic thermal channel and can be used to model lossy processes with background noise for low-temperature systems. When $N = 0$, it reduces to the conventional amplitude damping channel. For the GAD channel, some prior works have established bounds on its quantum capacity~\cite{Rosati2018a,Khatri2019,Patron2009}.
and improved lower bounds on the quantum capacity of the GAD channel were derived in \cite{Bausch2018}.

To be specific, the generalized amplitude damping channel is a two-parameter family of channels described as follows:
\begin{equation}
\mathcal{A}_{y, N}(\rho)=A_{1} \rho A_{1}^{\dagger}+A_{2} \rho A_{2}^{\dagger}+A_{3} \rho A_{3}^{\dagger}+A_{4} \rho A_{4}^{\dagger},
\end{equation}
where $y,N\in[0,1]$ and
\begin{align}
A_{1}&=\sqrt{1-N}(|0\rangle\langle 0|+\sqrt{1-y}| 1\rangle\langle 1|),\\
A_{2}&=\sqrt{y(1-N)}|0\rangle\langle 1|,\\
A_{3}&=\sqrt{N}(\sqrt{1-y}|0\rangle\langle 0|+| 1\rangle\langle 1|),\\
A_{4}&=\sqrt{y N}|1\rangle\langle 0|.
\end{align}

To apply our bounds, we use the technique mentioned in Remark \ref{remark 1}. We first decompose the GAD channel as the following convex combination of two degradable channels:
\begin{align}
\cA_{y,N} = \sqrt{1-N} \cA^1_{y,N}(\rho) + \sqrt{N} \cA^2_{y,N}(\rho),
\end{align}
with $\cA^1_{y,N}(\rho) = A_1 \rho A_1^\dagger/(1-N)+A_2 \rho A_2^\dagger/(1-N)$ and  $\cA^2_{y,N}(\rho) = A_3 \rho A_3^\dagger/N+A_4 \rho A_4^\dagger/N$.
Then, we could apply the pure flag approximate degradable bound in Proposition~\ref{prop:pure state bound} by introducing the flagged channel
\begin{align}
\widehat{\cA}_{y,N} = \sqrt{1-N} \cA^1_{y,N}(\rho) \otimes \proj{\psi_\a} + \sqrt{N} \cA^2_{y,N}(\rho)\otimes \proj 0,
\end{align} 
where $\ket{\psi_\a} = \sqrt{\a}\ket0+\sqrt{1-\a}\ket 1$.

By numerical optimization, we observe that the optimal value is achieved by $\a=0$ for noise parameter $y\in(0,1/2)$. By further exploration, we find that the flagged channel $\widehat{\cA}_{y,N}$ with $\ket{\psi_a}=\ket 1$ is degradable due to the fact that both $\cA^1_{y,N}$ and $\cA^2_{y,N}$ are degradable (cf.~Lemma 4 of \cite{Smith2008b}).

As shown in Appendix~\ref{sec:GAD bound}, we derive the following bound on the quantum capacity of GAD channel.  
\begin{proposition}
For the GAD channel with noise parameter $y\in (0,1/2)$ and $N\in(0,1)$, it holds that
\begin{align}
Q({\cA}_{y,N}) &\le  Q(\widehat{\cA}_{y,N})= \max _{p \in[0,1]} I_{\mathrm{c}}\left(p\proj0+(1-p)\proj1, \widehat{\mathcal{A}}_{\gamma, N}\right), \label{eq:GAD bound}
\end{align}
where $I_{\mathrm{c}}\left(\rho, \cN\right) \equiv H(\mathcal{N}(\rho))-H\left(\mathcal{N}^{c}(\rho)\right)$ and $\widehat{\cA}_{y,N} = \sqrt{1-N} \cA^1_{y,N}(\rho) \otimes \proj{1} + \sqrt{N} \cA^2_{y,N}(\rho)\otimes \proj 0$.
\end{proposition}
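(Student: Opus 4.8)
The plan is to prove the statement in three movements: the inequality $Q(\cA_{y,N})\le Q(\widehat{\cA}_{y,N})$, then the identity $Q(\widehat{\cA}_{y,N})=\max_{\rho} I_{\mathrm c}(\rho,\widehat{\cA}_{y,N})$ coming from degradability of the flagged channel, and finally the restriction of the optimal input to the diagonal family $p\proj0+(1-p)\proj1$ via phase covariance. The crucial structural feature that makes everything work precisely for $y\in(0,1/2)$ is that the two flag states $\proj1$ and $\proj0$ are \emph{orthogonal}, so the flag can be read out without disturbance. The inequality itself is just the flag bound~\eqref{eq:flag bound}: since $\widehat{\cA}_{y,N}$ is an extension of $\cA_{y,N}$ in which Bob additionally holds the flag register $F$, tracing out $F$ returns $\tr_F\circ\widehat{\cA}_{y,N}=\cA_{y,N}$ (the flags are normalized), and discarding a subsystem at the output is a fixed post-processing that cannot increase the quantum capacity, so $Q(\cA_{y,N})\le Q(\widehat{\cA}_{y,N})$.

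Next I would show $\widehat{\cA}_{y,N}$ is degradable. Writing it as $\widehat{\cA}_{y,N}(\rho)=\cN_0(\rho)\ox\proj1+\cN_1(\rho)\ox\proj0$, where $\cN_0,\cN_1$ are the CP maps proportional to $\cA^1_{y,N},\cA^2_{y,N}$, I take the Stinespring isometry $V=\sum_{j,k}A_{j,k}\ox\ket{e_j}_F\ox\ket{jk}_E$ built from the Kraus operators $A_{j,k}$ of $\cN_j$ and the orthonormal flags $\ket{e_0}=\ket1,\ \ket{e_1}=\ket0$; this is an isometry because $\cN_0+\cN_1=\cA_{y,N}$ is trace preserving. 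Since $\braket{e_j}{e_{j'}}=\delta_{jj'}$, tracing out $BF$ makes the complementary channel block diagonal, $\widehat{\cA}_{y,N}^c(\rho)=\bigoplus_j \cN_j^c(\rho)$, a direct sum of the branch complementary maps. Each $\cA^j_{y,N}$ is a (possibly bit-flipped) amplitude damping channel, hence degradable for $y\le1/2$ with some degrading channel $\cD_j$ (cf.~Lemma~4 of~\cite{Smith2008b}); this is exactly where $y\in(0,1/2)$ is used, since for $y>1/2$ the branches become antidegradable. A global degrading channel is then assembled by measuring $F$ in the $\{\ket{e_j}\}$ basis (nondestructive, as the flags are orthogonal), applying $\cD_j$ to the $B$-register conditioned on the outcome $j$, and writing the result into the $j$-th environment block; linearity gives $\cD\circ\widehat{\cA}_{y,N}=\widehat{\cA}_{y,N}^c$. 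Degradability then yields $Q(\widehat{\cA}_{y,N})=Q^{(1)}(\widehat{\cA}_{y,N})=\max_{\rho} I_{\mathrm c}(\rho,\widehat{\cA}_{y,N})$ by~\cite{Devetak2005}.

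Finally I would reduce the optimization to diagonal inputs. The amplitude-damping branches are phase covariant and the fixed flags are phase invariant, so the dilation satisfies $V Z_\theta=(Z_\theta\ox I_F\ox W'_\theta)\,V$ with $Z_\theta=\ketbra00+e^{i\theta}\ketbra11$; consequently both $\widehat{\cA}_{y,N}$ and $\widehat{\cA}_{y,N}^c$ are covariant and $I_{\mathrm c}(Z_\theta\rho Z_\theta^\dg,\widehat{\cA}_{y,N})=I_{\mathrm c}(\rho,\widehat{\cA}_{y,N})$. Because the coherent information of a degradable channel is concave in the input, averaging an optimal $\rho^\ast$ over the phase group can only increase $I_{\mathrm c}$, while the twirl $\int\frac{d\theta}{2\pi}\,Z_\theta\rho^\ast Z_\theta^\dg$ is diagonal, of the form $p\proj0+(1-p)\proj1$. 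Hence the maximum is attained on this one-parameter family, which gives the claimed formula.

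I expect the main obstacle to be the degradability step: one must verify carefully that the branch degrading maps $\cD_j$ assemble into a single legitimate CPTP map and that the block-diagonal structure of $\widehat{\cA}_{y,N}^c$ reproduces $\cD(\widehat{\cA}_{y,N}(\rho))$ exactly. This is precisely what forces the flags to be orthogonal and the noise parameter to satisfy $y\le 1/2$, so that each amplitude-damping branch is degradable rather than antidegradable; the covariance reduction is comparatively routine once concavity on degradable channels is in hand.
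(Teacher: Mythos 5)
Your proposal is correct and follows essentially the same route as the paper: the flag bound $Q(\cA_{y,N})\le Q(\widehat{\cA}_{y,N})$, degradability of the orthogonally-flagged channel (which the paper obtains by citing Lemma~4 of Smith--Smolin, and you prove from scratch via the block-diagonal complementary channel and conditional degrading maps), and the reduction to diagonal inputs via $\sigma_Z$-covariance together with concavity of the coherent information for degradable channels. The only difference is one of detail, not of method: you unpack the cited lemmas explicitly, which is a faithful reconstruction rather than a new argument.
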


In Figure~\ref{fig:GAD}, we compare our bound with the previously best known bounds obtained in \cite{Khatri2019} on the quantum capacity of the GAD channel. We mainly compare our bound with the data-processing bound~\cite{Khatri2019} defined as
\begin{align}
	Q(\mathcal{A}_{\gamma, N})  \leq Q_{DP}(\mathcal{A}_{\gamma, N}) \equiv Q(\mathcal{A}_{{y(1-N)}/{(1-yN)}, 0})
\end{align}
and
 the Rains information~\cite{Tomamichel2015a} defined as follows:
\begin{align}
R(\cN_{A'\to B})  \equiv \max _{\rho_{A} \in \mathcal{S}(A)} \min _{\sigma_{AB} \in \operatorname{PPT}^{\prime}(A: B)} {D}\left(\mathcal{N}_{A^{\prime} \rightarrow B}\left(\phi_{A A^{\prime}}\right) \| \sigma_{A B}\right),	
\end{align}
where $\phi_{AA'}$ is a purification of $\rho_A$, $\operatorname{PPT}^{\prime}(A: B) \equiv\{\sigma_{A B} | \sigma_{A B} \geq 0,\|\sigma_{A B}^{T_{B}}\|_{1} \leq 1\}$, $T_B$ is the partial transpose on system $B$, and $D\left(\rho \| \sigma\right)\equiv \tr\left[\rho\left(\log _{2} \rho-\log _{2} \sigma\right)\right]$ is the quantum relative entropy. Rains information can be seen as the channel version of the Rains bound~\cite{Rains2001}, whose variants have many applications in quantum communication. For readers interested in the Rains bound, we refer to \cite{Rains2001,Wang2016c,Audenaert2002,Wang2016,Bauml2019,Bauml2018} for more information. Codes for computing our bound for the GAD channel can be found in \cite{Wang_codes}. The codes for computing the coherent information and the Rains information of the GAD channel are from~\cite{Khatri2019}, where the authors utilized the $\sigma_Z$ covariance and concavity of Rains information in the input state to simplify the optimization.
Interestingly, we numerically find that our new bound (red-dashed line) achieves $0$ at $y=1/2$, which is in contrast to the previous bounds from \cite{Khatri2019}. These cases could be verified as anti-degradable channels utilizing the max-unextendible entanglement in \cite{WWW19}.

\begin{figure}[H]
	\centering
	\begin{adjustwidth}{0.2cm}{0.2cm}
		\begin{tikzpicture}
		\node at (-5.35,0.38) {\includegraphics[width = 5.68cm]{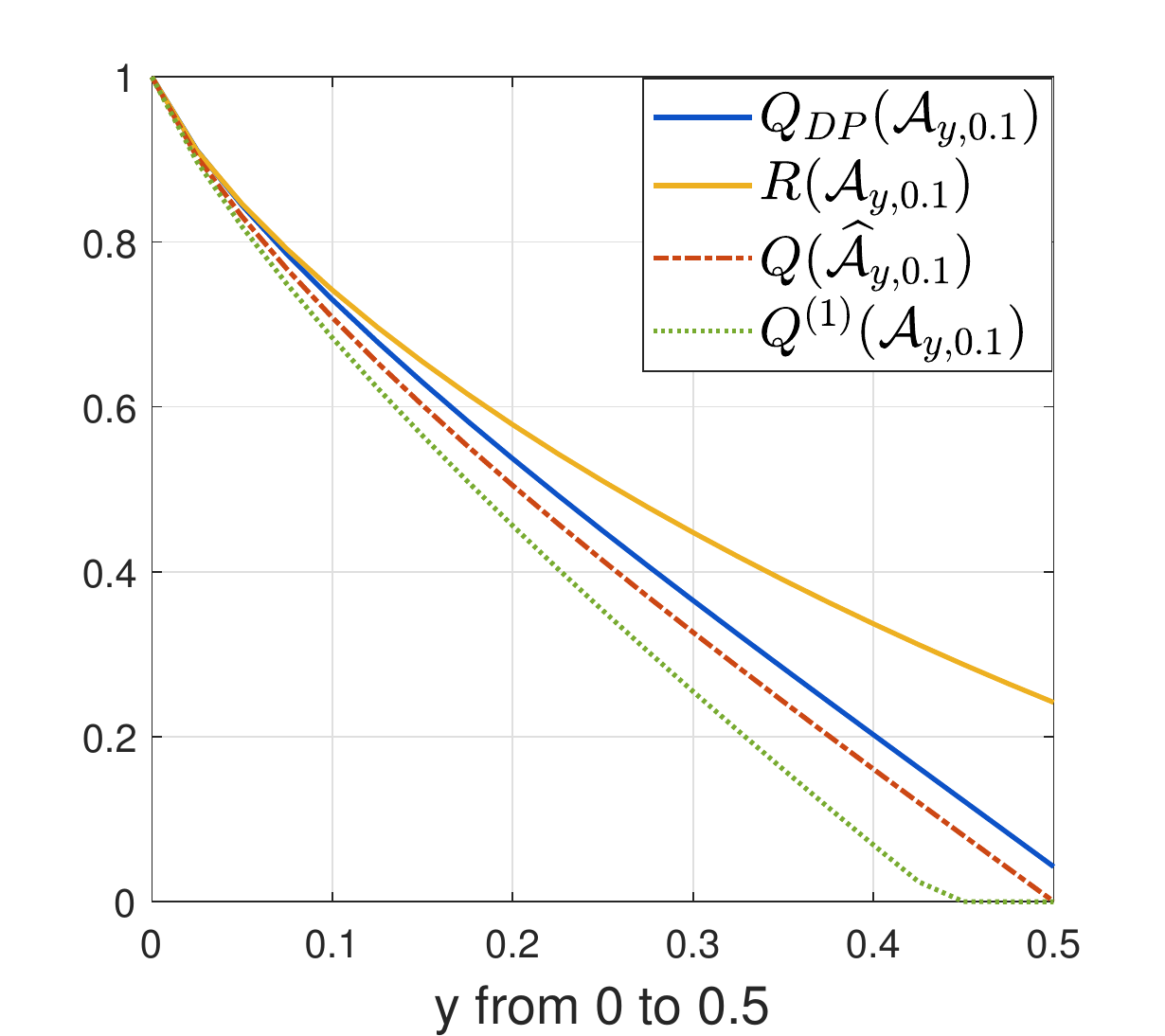}};
		\node at (-5.35,-2.6) {\small (a)   $\cA_{y,0.1}$};
		
		\node at (0,0.38) {\includegraphics[width = 5.68cm]{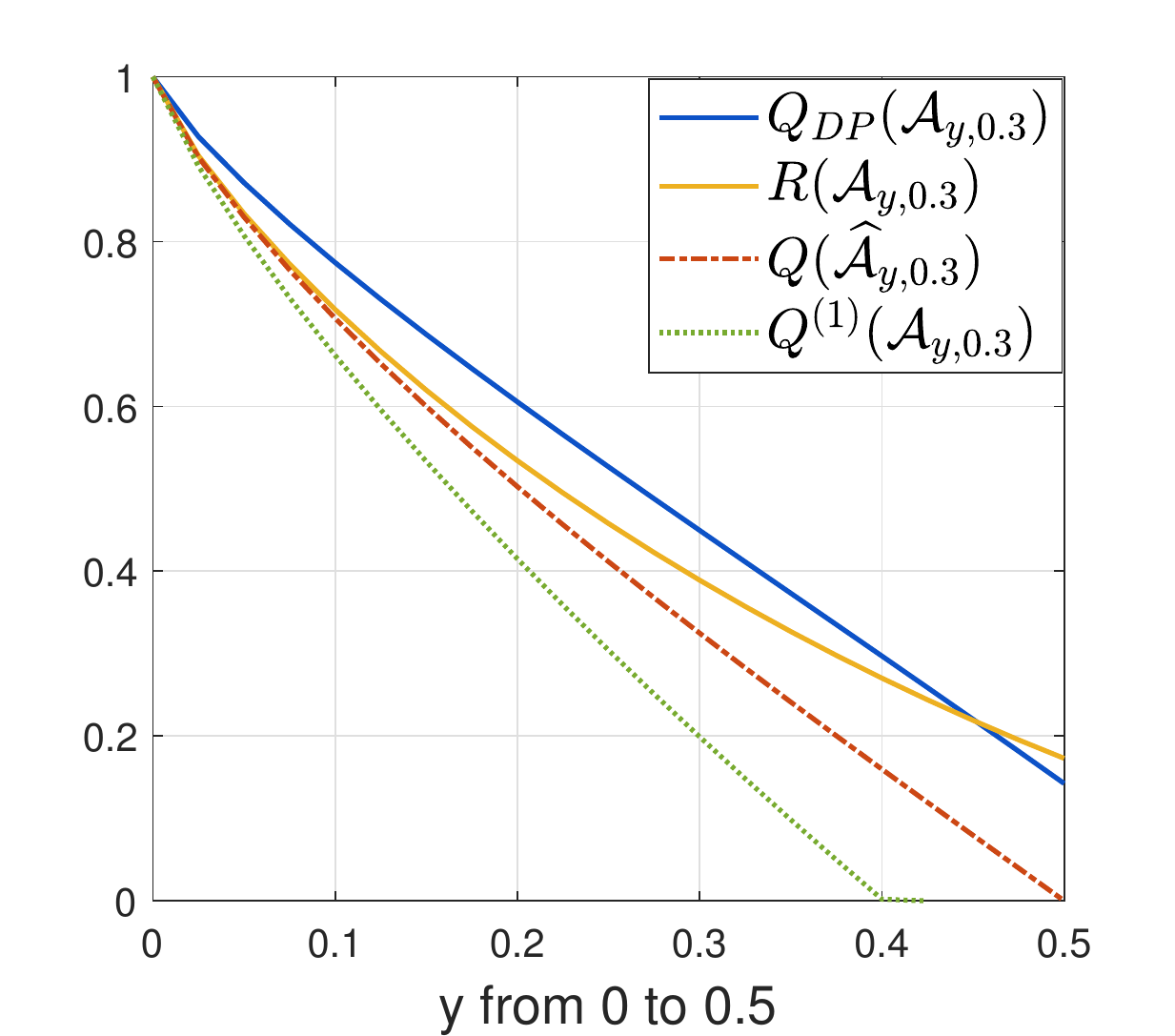}};
		\node at (0,-2.6) {\small (b) $\cA_{y,0.3}$};
		
		\node at (5.35,0.38) {\includegraphics[width = 5.68cm]{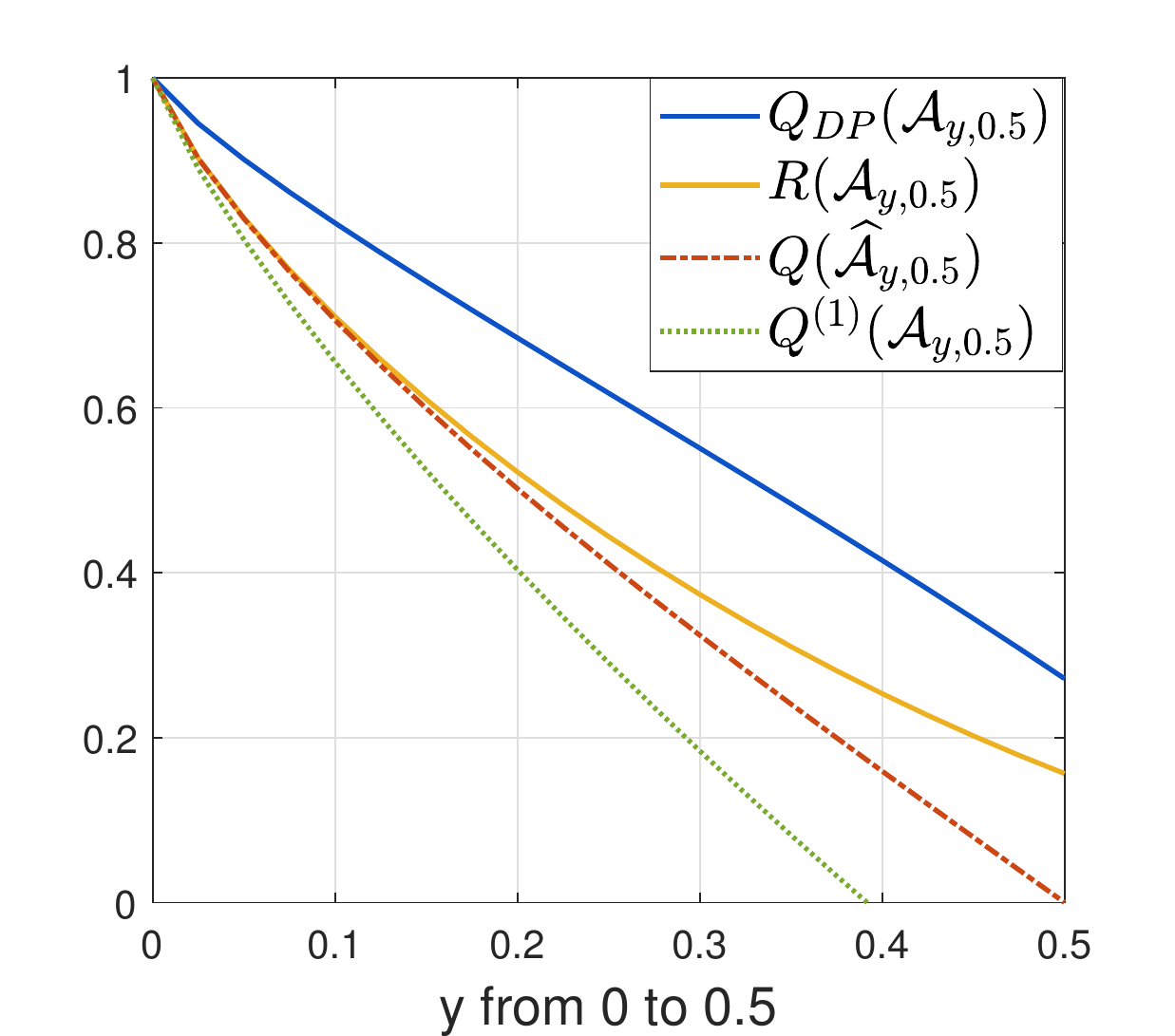}};
		\node at (5.35,-2.6) {\small (c)  $\cA_{y,0.5}$};
		\end{tikzpicture}
	\end{adjustwidth}
	\caption{\small  Upper and lower bounds on the quantum capacity $Q(\cA_{y,N})$ of the generalized amplitude damping (GAD) channel. The green dotted line depicts the lower bound. The red dashed line depicts our bound in Eq.~\eqref{eq:GAD bound}, which improves the bounds via data-processing approach and Rains information obtained in~\cite{Khatri2019}.}
	\label{fig:GAD}
\end{figure}

\subsection{Fundamental limits on private communication}
The private capacity of a quantum channel is defined as the optimal rate at which classical information can be transmitted faithfully and privately from the sender Alice to the receiver Bob. The privacy here means a third party Eve who has access to the channel's environment cannot obtain anything useful about the information that Alice sends to Bob.
The private capacity theorem states that the private capacity of a noisy quantum channel is given by its regularized private information~\cite{Cai2004,Devetak2005}
\begin{equation}
P(\mathcal{N})=\lim _{n \rightarrow \infty} \frac{1}{n} P^{(1)}\left(\mathcal{N}^{\otimes n}\right)=\sup _{n \in \mathbb{N}} \frac{1}{n} P^{(1)}\left(\mathcal{N}^{\otimes n}\right),
\end{equation}
where $P^{(1)}(\mathcal{N}) = \max _{{E}}\left[\chi({E}, \mathcal{N})-\chi\left({E}, \mathcal{N}^{c}\right)\right]$ is the private information with the maximization taken over all possible ensembles $E=\{p_j,\rho_j\}$ and $\chi({E}, \mathcal{N}) = H\left(\sum_{i} p_{i} \mathcal{N}\left(\rho_{i}\right)\right)-\sum_{i} p_{i} H\left(\mathcal{N}\left(\rho_{i}\right)\right)$ is the Holevo information of the ensemble. The private capacity of a quantum channel is also extremely difficult to solve and only some weak converse or strong converse bounds were established in the past two decades (see, e.g., \cite{Takeoka2014,Berta2013,Christandl2016,Pirandola2015b,Wilde2016c}).
In particular, for a degradable channel $\cN$, the private capacity coincides with the coherent information of the channel~\cite{Smith2007a}, which implies that $P(\cN)=Q(\cN)=Q^{(1)}(\cN)$.

Previously the upper bound on the private capacity of an $\varepsilon$-degradable channel $\cN$ was established as follows~\cite{Sutter2014,Leditzky2017a}:
\begin{align}
	  P(\mathcal{N}) \leq Q^{(1)}(\mathcal{N})+2\ve \log (d_E-1)+8 \ve \log d_E + 2 h\left({\ve}\right)+4 g\left(\ve\right),
\end{align}
where $\ve=\eta(\cN)$ and $d_E$ is the dimension of the environment.

Using similar ideas as in Proposition~\ref{prop:pure state bound} and Corollary~\ref{prop:Q bound special}, we obtain the following upper bounds on the private capacity of a quantum channel.
Let $\cN$ be a quantum channel with CP map decomposition $\cN=\cN_0+\cN_1$. The private capacity of $\cN$ is bounded by the following optimized pure flag upper bound:
	\begin{align}
	P_{pf}(\cN):= \inf_{0\le \alpha\le 1} \gamma(\cN_0,\cN_1,\alpha),
	\end{align}
	with $\gamma(\cN_0,\cN_1,\alpha):= Q^{(1)}(\widehat\cN) + 2{\eta(\widehat\cN)} \log (d_E-1) +8\eta(\widehat\cN) \log  d_E + 2h\left({\eta(\widehat\cN)}\right)
	+ 4g\left(\eta(\widehat\cN)\right)$,
	where $\widehat\cN (\cdot)=\cN_0(\cdot) \otimes \proj{\psi_\a} + \cN_1(\cdot) \otimes \proj 0$, $\ket{\psi_\a} = \sqrt \alpha \ket 0+\sqrt{1-\alpha}\ket1$, and $d_E$ is the dimension of  the output system of  $\widehat\cN^c$. 
By further restricting the extended channel to be degradable, we obtain the following bound. Let $\cN$ be a quantum channel with CP map decomposition $\cN=\cN_0+\cN_1$. The private capacity of $\cN$ satisfies that
\begin{align}
P(\cN)\le	\widetilde P_{pf}(\cN) = \widetilde Q_{pf}(\cN)= \inf_{0\le \alpha\le 1} \left\{ Q^{(1)}(\widehat\cN)\big| \eta(\widehat\cN) = 0 \right\}, \label{eq:private bound}
	\end{align}
	where $\widehat\cN (\cdot)=\cN_0(\cdot) \otimes \proj{\psi_\a} + \cN_1(\cdot) \otimes \proj 0$ and $\ket{\psi_\a} = \sqrt \alpha \ket 0+\sqrt{1-\alpha}\ket1$.
Note that the set could be empty, in which case the bound above could be taken to infinity. However, it may be possible to generalize this bound by employing multiple flags.

%
%
 
We further show applications of our bounds in quantum key distribution. It is well known that Bennett-Brassard quantum key distribution~\cite{Bennett1984} is the most widely studied and practically applied quantum cryptography protocol. It is thus desirable to obtain efficiently computable bounds on the achievable key rate. Here we apply our bounds and evaluate the secret key rate of this protocol. To start the analysis, we first introduce the so-called BB84 channel: a qubit Pauli channel with independent bit-flip and phase-flip error probability where $p_X\in[0,1/2]$ denotes the bit flip and $p_Z \in [0, 1/2]$ the phase flip probability. Due to its relevance for the BB84 protocol, this channel is known as the BB84 channel in the literature~\cite{Smith2008b}. Further information about the BB84 channel and its operational justification can be found in \cite{Wilde2019}.

More formally the BB84 channel is a CPTP map from $\cL(A)$ to $\cL(B)$:
\begin{align}
\cB_{p_X,p_Z}(\rho) = (1-p_X-p_Z+p_Xp_Z)\rho+(p_X-p_Xp_Z)X\rho X+(p_Z-p_Zp_X)Z\rho Z+p_Xp_ZY\rho Y,
\end{align}
with $d_A = d_B = 2$. It is easy to verify that a Bell state maximizes the coherent information and then the channel coherent information of the BB84 channel is given by 
\begin{align}
Q ^{(1)} (\cB_{p_X,p_Z})= 1 - h(p_X)- h(p_Z).
\end{align}
In particular, we will focus on the case where $p_X = p_Z =p$. In this case, Smith and Smolin~\cite{Smith2008b} showed that
\begin{equation}
Q\left(\mathcal{B}_{p, p}\right) \leq h\left(\frac{1}{2}-2 p(1-p)\right)-h(2 p(1-p)).
\end{equation}

To apply our bound, we first introduce the flagged channel of the BB84 channel 
$
\widehat\cB_{p,p} (\rho)= (1-p)^2\rho \otimes\proj{\psi_\a} +(p-p^2)X\rho X \otimes \proj 0+(p-p^2)Z\rho Z\otimes \proj0+p^2Y\rho Y\otimes\proj0$,
where $\ket {\psi_\a}=\sqrt{\a}\ket0+\sqrt{1-\a}\ket1$. Then we could use brute-force search to optimize the function in Eq.~\eqref{eq:private bound} and we numerically find degradable flagged channels for each $p$, whose coherent information could be efficiently computed. In Figure.~\ref{fig:PN BB84 low noise}, we numerically implement our bound to the BB84 channel and compare our bound with previously best-known bounds.

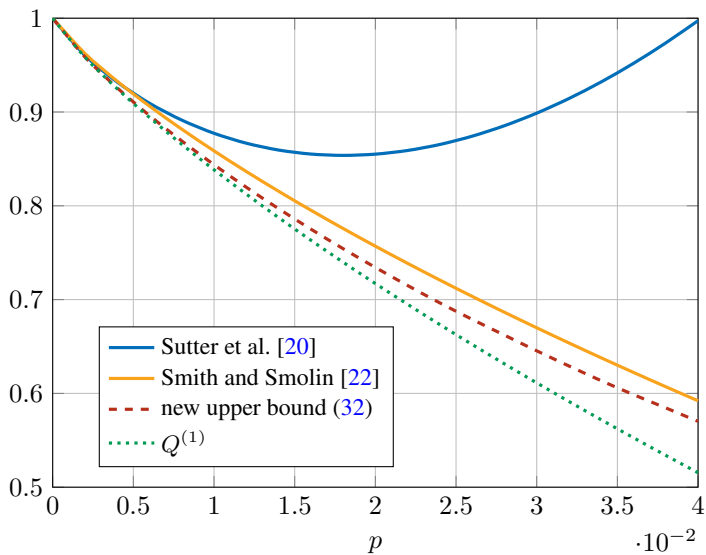
\begin{figure}[H]
\centering
  \begin{tikzpicture}
	\begin{axis}[
		height=7.8cm,
		width=10.11cm,
		grid=major,
		xlabel=$p$,
		xmin=0,
		xmax=0.04,
		ymax=1.0,
		ymin=0.5,
	    xtick={0,0.005,0.01,0.015,0.02,0.025,0.03,0.035,0.04},
        ytick={1,0.9,0.8,0.7,0.6,0.5},
		legend style={at={(0.3,0.343)},anchor=north,legend cell align=left,font=\footnotesize} 
	]
	
		\addplot[NavyBlue,very thick,smooth] coordinates {
  (0,	1)
(0.002,	0.960353023)
(0.004,	0.931937444)
(0.006,	0.909363877)
(0.008,	0.891364358)
(0.01,	0.877294917)
(0.012,	0.866750877)
(0.014,	0.859448804)
(0.016,	0.855175605)
(0.018,	0.853762592)
(0.02,	0.855071092)
(0.022,	0.85898363)
(0.024,	0.865398228)
(0.026,	0.874224512)
(0.028,	0.885380985)
(0.03,	0.898793044)
(0.032,	0.914391434)
(0.034,	0.932111268)
(0.036,	0.951891039)
(0.038,	0.973672096)
(0.04,	0.997397978)

	};
		\addlegendentry{Sutter et al.~\cite{Sutter2014}}	
		
		\addplot[YellowOrange,very thick,smooth] coordinates {
 (0,	1)
(0.002,	0.962395349)
(0.004,	0.932817884)
(0.006,	0.906270007)
(0.008,	0.881684372)
(0.01,	0.858552392)
(0.012,	0.836573659)
(0.014,	0.815549173)
(0.016,	0.795337182)
(0.018,	0.775831488)
(0.02,	0.756949522)
(0.022,	0.738625221)
(0.024,	0.720804525)
(0.026,	0.703442366)
(0.028,	0.686500605)
(0.03,	0.669946542)
(0.032,	0.653751833)
(0.034,	0.637891679)
(0.036,	0.622344199)
(0.038,	0.607089947)
(0.04,	0.592111533)
	};
		\addlegendentry{Smith and Smolin~\cite{Smith2008b}}	
		
		\addplot[dashed,BrickRed,very thick,smooth] coordinates {
(0,	1)
(0.002,	0.95865033)
(0.004,	0.925730793)
(0.006,	0.896211439)
(0.008,	0.869140656)
(0.01,	0.843813132)
(0.012,	0.819793611)
(0.014,	0.796955738)
(0.016,	0.775166242)
(0.018,	0.754307326)
(0.02,	0.734284861)
(0.022,	0.715025582)
(0.024,	0.696464166)
(0.026,	0.678909421)
(0.028,	0.662110075)
(0.03,	0.645394424)
(0.032,	0.629203882)
(0.034,	0.613505867)
(0.036,	0.598683319)
(0.038,	0.584527498)
(0.04,	0.5701877)

	};
		\addlegendentry{new upper bound~\eqref{eq:private bound}}

		\addplot[dotted,ForestGreen,very thick,smooth] coordinates {
 (0,	1)
(0.002,	0.958371857)
(0.004,	0.924755279)
(0.006,	0.894169839)
(0.008,	0.86555691)
(0.01,	0.838413728)
(0.012,	0.81244418)
(0.014,	0.787452614)
(0.016,	0.763299977)
(0.018,	0.739882308)
(0.02,	0.717118915)
(0.022,	0.694945343)
(0.024,	0.673308914)
(0.026,	0.652165766)
(0.028,	0.631478813)
(0.03,	0.611216284)
(0.032,	0.591350658)
(0.034,	0.571857864)
(0.036,	0.552716671)
(0.038,	0.533908215)
(0.04,	0.515415622)
	};
	\addlegendentry{$Q^{(1)}$}

	\end{axis}  

\end{tikzpicture}
 	\caption{Upper and lower bounds on the private capacity $P(\cB_{p,p})$ of the low-noise BB84 channel with $p_X=p_Y=p\in\left[0,0.04\right]$. The channel's coherent information $Q^{(1)}$ (green, dotted) provides a lower bound on the private capacity $P(\cB_{p,p})$. The red dashed line depicts our bound in Eq.~\eqref{eq:private bound}, which improves previously best known upper bounds via approximate degradable channels~\cite{Sutter2014} and additive extensions \cite{Smith2008b}.}
 	\label{fig:PN BB84 low noise}
\end{figure} 

\section{Discussions}
To summarize, we have established single-letter upper bounds on one-way entanglement distillation as well as quantum/private communication by considering extended states and channels, respectively.
We have seen that the concept of approximate degradable states (channels) and the idea of parametrized extended states (channels) can be combined to a more robust and general notion of approximate degradable extended states (channels). Our bounds have utilized the approximately preserved beneficial additivity properties of the approximate degradable states and channels as well as the flexibility or extra space provided by the flagged or extended systems. Our results have provided powerful and efficiently computable fundamental limits for quantum and private communication via quantum channels of both theoretical and practical interest, including the depolarizing channel, the BB84 channel, and the generalized amplitude damping channel. In particular, our improved upper bounds on the quantum and private capacities of these channels have deepened our understanding of quantum communication.
The codes for computing the bounds in this work can be found in \cite{Wang_codes}.

One interesting future direction is to explore the interaction between flagged channels and the degradable and anti-degradable decomposition of channels~\cite{Leditzky2017}, where the latter provides the state-of-art upper bounds on the quantum capacity of high-noise depolarizing channels. It is of great interest to have a further analytical study on the depolarizing channel using our approach with a similar spirit of \cite{Leditzky2018a} and to consider the behaviour using multiple flags. Another direction is to apply the main idea of this work to classical communication over quantum channels. For the classical capacity, there are several known upper bounds \cite{Bennett2002,Brandao2011c,Wang2017a,Wang2016g,Leditzky2018,Fang2019a} but the classical capacity of the amplitude damping channel is still open. It will be interesting to consider the interaction between extended channels and the approximate additivity of the Holevo information. 

\section*{Acknowledgement}
We would like to thank Kun Fang, Felix Leditzky, and Kun Wang for discussions.  We also thank Mark M. Wilde for helpful suggestions and discussions.
 
\bibliographystyle{IEEEtran}

{\small \bibliography{bib}}
%
\appendix
\section{SDP for the approximate degradability}\label{sec:app A}
\begingroup
 Note that for any two quantum channels $\cN_1, \cN_2$ from $A$ to $B$, the diamond norm of their difference can be expressed as a semidefinite program (SDP) of the following form~\cite{Watrous2009}: 		
\begin{align}\label{SDP diamond}
\frac12 \|\cN_1-\cN_2\|_\di = \inf \big\{\ \mu \ & \big|\ \tr_B Z \leq \mu \1_A,  Z \ge J_{\cN_1}-J_{\cN_2},\ Z \ge 0 \ \big\},
\end{align}
where $J_{\cN_1}$ and $J_{\cN_2}$ are the corresponding Choi-Jamio\l{}kowski matrices.
The degradability parameter of $\cN$ thus can be computed by the following SDP \cite{Sutter2014}:
\begin{subequations}\label{sdp: gN}
	\begin{align}
	\inf & \quad \ \mu\\
	\text{\rm s.t.} &\quad  \tr_{E} Z_{AE} \leq \mu I_{A}, Z_{AE}\ge 0,\\
		           & \quad  \tr_{E} J_{\cD}^{BE} = I_B, J_{\cD}^{BE}\ge 0, 
	 \quad \ Z_{AE} \ge  J_{\cN^c}^{AE}- J_{\cD\circ\cN}, \label{eq:condition compose}
	\end{align}
	where $\Phi_{EE'}$ is the unnormalized maximally entangled state with dimension $d_E$. Note that the last condition Eq.~\eqref{eq:condition compose} has semidefinite presentations,
	 (e.g., $\tr_{BE} (J_\cN^{AB}\otimes \Phi_{EE'})((J_{\cD}^{BE})^T\otimes I_{AE})$, see alternative presentations via transfer matrix in~\cite{Sutter2014}).
\end{subequations}

\endgroup
\section{Upper bound for the quantum capacity of the GAD channel}\label{sec:GAD bound}
\begingroup
We decompose the GAD channel as the following convex combination of two degradable channels:
\begin{align}
\cA_{y,N} = \sqrt{1-N} \cA^1_{y,N}(\rho) + \sqrt{N} \cA^2_{y,N}(\rho),
\end{align}
with $\cA^1_{y,N}(\rho) = A_1 \rho A_1^\dagger/(1-N)+A_2 \rho A_2^\dagger/(1-N)$ and  $\cA^2_{y,N}(\rho) = A_3 \rho A_3^\dagger/N+A_4 \rho A_4^\dagger/N$.
\begin{proposition}
For the GAD channel with noise parameter $y\in (0,1/2)$ and $N\in(0,1)$, it holds that
\begin{align}
Q({\cA}_{y,N}) &\le  Q(\widehat{\cA}_{y,N})= \max _{p \in[0,1]} I_{\mathrm{c}}\left(p\proj0+(1-p)\proj1, \widehat{\mathcal{A}}_{\gamma, N}\right), 
\end{align}
where $I_{\mathrm{c}}\left(\rho, \cN\right) \equiv H(\mathcal{N}(\rho))-H\left(\mathcal{N}^{c}(\rho)\right)$ and
$\widehat{\cA}_{y,N} = \sqrt{1-N} \cA^1_{y,N}(\rho) \otimes \proj{1} + \sqrt{N} \cA^2_{y,N}(\rho)\otimes \proj 0$.
\end{proposition}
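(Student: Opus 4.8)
The plan is to prove the statement in two stages that mirror the structure of Corollary~\ref{prop:Q bound special}: first establish the converse inequality $Q(\cA_{y,N})\le Q(\widehat{\cA}_{y,N})$ by viewing $\widehat{\cA}_{y,N}$ as a flagged extension of $\cA_{y,N}$, and then prove the equality $Q(\widehat{\cA}_{y,N})=\max_{p}I_{\mathrm{c}}(p\proj0+(1-p)\proj1,\widehat{\cA}_{y,N})$ by showing that this particular flagged channel is degradable and that its optimal input is diagonal.

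For the inequality I would note that discarding (tracing out) the flag register of $\widehat{\cA}_{y,N}$ returns the decomposition $\sqrt{1-N}\,\cA^1_{y,N}+\sqrt{N}\,\cA^2_{y,N}=\cA_{y,N}$, so $\cA_{y,N}$ is obtained from $\widehat{\cA}_{y,N}$ by a CPTP post-processing on Bob's side. Hence $\widehat{\cA}_{y,N}$ is an environment-assisted extension of $\cA_{y,N}$, and monotonicity of the quantum capacity under such extensions, Eq.~\eqref{eq:flag bound}, gives $Q(\cA_{y,N})\le Q(\widehat{\cA}_{y,N})$. This is exactly the first inequality in Proposition~\ref{prop:q general}, instantiated here with the two orthogonal pure flags $\proj1$ and $\proj0$, i.e.\ the choice $\a=0$ in Proposition~\ref{prop:pure state bound}.

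The substance of the argument is the equality, for which I would first verify degradability of $\widehat{\cA}_{y,N}$. After cancelling the normalising factors, $\cA^1_{y,N}$ is the amplitude-damping channel with damping $y$ (damping towards $\ket0$) and $\cA^2_{y,N}$ is its $X$-conjugate (damping towards $\ket1$); both are degradable for $y\le 1/2$, so on the range $y\in(0,1/2)$ each branch admits a degrading map $\cD_1,\cD_2$ with $\cD_i\circ\cA^i_{y,N}=(\cA^i_{y,N})^c$. Because the two flags $\ket1_F,\ket0_F$ are orthogonal, I would construct a global degrading map that measures $F$ in the $\{\ket0,\ket1\}$ basis to learn which branch acted without disturbing the $B$-system, applies the corresponding $\cD_i$, and then rewrites the flag into the environment register; the verification $\widehat{\cD}\circ\widehat{\cA}_{y,N}=\widehat{\cA}^c_{y,N}$ is then a direct computation, where $\widehat{\cA}^c_{y,N}$ is itself a flagged channel of the complementary branches $(\cA^1_{y,N})^c,(\cA^2_{y,N})^c$ with the flags now carried by the environment. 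This is precisely the orthogonal-flag lifting of degradability (Lemma~4 of \cite{Smith2008b}) invoked in Remark~\ref{remark 1}, and it places us in the $\eta(\widehat{\cA}_{y,N})=0$ case of Corollary~\ref{prop:Q bound special}. Degradability then yields $Q(\widehat{\cA}_{y,N})=Q^{(1)}(\widehat{\cA}_{y,N})=\max_{\rho}I_{\mathrm{c}}(\rho,\widehat{\cA}_{y,N})$ by \cite{Devetak2005}.

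Finally I would reduce the maximisation over all inputs to diagonal states. Each Kraus operator $A_i$ satisfies $A_i Z=\pm Z A_i$, so $\cA^1_{y,N}$, $\cA^2_{y,N}$, and hence $\cA_{y,N}$ are covariant under the phase flip $Z$; since the flag register is untouched by conjugation of the input, $\widehat{\cA}_{y,N}(Z\rho Z)=(Z\ox I_F)\widehat{\cA}_{y,N}(\rho)(Z\ox I_F)^{\dagger}$. As the channel is degradable, $I_{\mathrm{c}}(\cdot,\widehat{\cA}_{y,N})$ is concave in the input, so averaging an optimiser $\rho$ with $Z\rho Z$ produces a $Z$-invariant, hence diagonal, state $p\proj0+(1-p)\proj1$ with no smaller coherent information; this collapses the optimisation to the single parameter $p\in[0,1]$ and gives the claimed formula. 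I expect the main obstacle to be the degradability step, namely writing down the global degrading map and checking that it reproduces $\widehat{\cA}^c_{y,N}$, since this is where the orthogonality of the flags and the two individual degrading maps must be combined correctly; the covariance reduction and the degradable-channel capacity formula are then standard.
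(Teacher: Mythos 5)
Your proposal is correct and follows essentially the same route as the paper's proof: the flag-extension monotonicity $Q(\cA_{y,N})\le Q(\widehat{\cA}_{y,N})$, degradability of $\widehat{\cA}_{y,N}$ via the orthogonal-flag combination of the two degradable branches (Lemma 4 of \cite{Smith2008b}), the degradable-channel capacity formula $Q=Q^{(1)}$, and the reduction to diagonal inputs by $\sigma_Z$-covariance together with concavity of the coherent information for degradable channels \cite{Wolf2007}. You merely spell out details the paper leaves to citations, such as the explicit measure-the-flag degrading map and the check that each Kraus operator commutes or anticommutes with $Z$.
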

\begin{proof}
	Since $\widehat{\cA}_{y,N}$ is the flagged channel of  ${\cA}_{y,N}$, we have 
	 \begin{align}
	 Q({\cA}_{y,N}) \le  Q(\widehat{\cA}_{y,N}).
	 \end{align}
	 Then, we have
	 \begin{align}
	 Q({\cA}_{y,N}) &\le   Q(\widehat{\cA}_{y,N})\\
	 & = Q^{(1)}(\widehat{\cA}_{y,N})\\
	 &= \max _{p \in[0,1]} I_{\mathrm{c}}\left(p\proj0+(1-p)\proj1, \widehat{\mathcal{A}}_{\gamma, N}\right). 
	 \end{align}
	 The first equality holds since $\widehat{\cA}_{y,N}$ is degradable. The second inequality is due to fact that the coherent information for degradable channels is a concave function, which implies that diagonal input states outperform non-diagonal states~\cite{Wolf2007}. Note that the covariance of this channel with respect to $\sigma_Z$ was used.
\end{proof}
\endgroup
\end{document}